\documentclass[journal,12pt,onecolumn]{IEEEtran}
\usepackage{mathrsfs}
\usepackage{pdfpages}
\usepackage{graphicx,color}
%\PassOptionsToPackage{draft}{graphicx}
\usepackage{amsmath}
\usepackage{epstopdf}
\usepackage{float}
\usepackage{amssymb} 
\usepackage[noadjust]{cite}
\usepackage{caption}
\usepackage{subcaption}
\usepackage{bbm}
\usepackage{amsthm,cite}

\newtheorem{thm}{Theorem}[section]

\newtheorem{lem}[thm]{Lemma}

\newtheorem{Def}{Definition}

\newcommand\wto{\xrightarrow[]{w}}

\newtheorem{prob-statement}{Problem}

\DeclareMathOperator*{\arginf}{arg\,inf}

\begin{document}
	\title{Robust Kullback-Leibler Divergence and Universal Hypothesis Testing for Continuous Distributions}
	\author{\IEEEauthorblockN{Pengfei Yang and Biao Chen\\}
\thanks{This work was supported in part by the Air Force Office of
Scientific Research under Grant FA9550-16-1-0077. 

P. Yang was with the Department of Electrical Engineering and Computer Science, Syracuse University, Syracuse,
NY, 13244 USA. Email: ypengf@gmail.com. He is now with Point72 Asset Management, L.P., New York, NY. B. Chen is with the Department
of Electrical Engineering and Computer Science, Syracuse University, Syracuse,
NY, 13244 USA. Email: bichen@syr.edu.}
%	\IEEEauthorblockA{Department of Electrical Engineering and Computer
%			Science\\
%			Syracuse University\\
%			Syracuse, NY 13244\\
%			Email: pyang04\{bichen\}@syr.edu}
	}
%\markboth{Submitted to {\em IEEE Trans. Information Theory}, May~2017}
%{Yang and Chen: Robust Kullback-Leibler Divergence and Universal Hypothesis Testing for Continuous Distributions }
	\maketitle
	
	\begin{abstract}
Universal hypothesis testing refers to the problem of deciding whether samples come from a nominal distribution or an unknown distribution that is different from the nominal distribution. Hoeffding's test, whose test statistic is equivalent to the empirical Kullback-Leibler divergence (KLD), is known to be asymptotically optimal for distributions defined on finite alphabets. With continuous observations, however, the discontinuity of the KLD in the distribution functions results in significant complications for universal hypothesis testing. This paper introduces a robust version of the classical KLD, defined as the KLD from a distribution to the L$\acute{e}$vy ball of a known distribution. This robust KLD is shown to be continuous in the underlying distribution function with respect to the weak convergence. The continuity property enables the development of a universal hypothesis test for continuous observations that is shown to be asymptotically optimal for continuous distributions in the same sense as that of the Hoeffding's test for discrete distributions.
	\end{abstract}
\begin{IEEEkeywords}
 Kullback-Leibler divergence, universal hypothesis testing, L\'evy metric.
\end{IEEEkeywords}
%{\em Index terms} -- Kullback-Leibler divergence, universal hypothesis testing, L\'evy metric.

	\section{Introduction}

The Kullback-Leibler divergence (KLD), also known as the relative entropy, is one of the most fundamental metrics in information theory and statistics \cite{Cover:1991, ICPCS_2004}. 
%{\color{red} I suggest that you give the definition of KLD right away - i.e., move eq. (4) and (5) and the preceding paragraph to here since you've used $D(\cdot|\cdot)$ in the next paragraph). 
The KLD has a number of operational meanings and finds applications in a diverse range of disciplines. For example, the mutual information, which is a special case of the KLD, is a fundamental quantity in both channel coding and data compression \cite{Cover:1991}. In hypothesis testing, the KLD is known to be the decay rates of error probabilities (e.g., see Stein's lemma \cite{Cover:1991} and Sanov's theorem \cite{largebook}). %The KLD has also been employed as a similarity measure in image registration or multimedia classification \cite{BP:2011, JPO:2013, ISD:2003}, as well as a proxy to detection performance in distributed signal processing as computing the detection error probability is often intractable \cite{JNT_1988, VSSN_2010}. 
	
An important application of the KLD is in the so-called universal hypothesis testing: given a nominal distribution $P_0$, the objective is to decide, upon observing a sample sequence, whether the underlying distribution that generates the sequence is $P_0$ or a distribution {\em different} from $P_0$. This problem was first formulated by Hoeffding \cite{Asymptotically}; with finite alphabet, Hoeffding developed a detector that is shown to be optimal according to the generalized Neyman-Pearson (NP) criterion, i.e., it achieves optimal type II error exponent subject to a constraint on the type-I error exponent \cite{Asymptotically}. The test statistic of Hoeffding's detector is equivalent to the KLD between the empirical distribution and $P_0$. 
%, i.e., the KLD from the empirical distribution obtained from the sample $\hat{\mu}_n$ to the nominal distribution $P_0$ (see eq. (\ref{eq:Hoeffding}).
%	\begin{eqnarray}
%	D(\hat{\mu}_n||P_0). \label{eq:comparing_discrete_1}
%	\end{eqnarray}

Hoeffding's result, however, does not generalize to the universal hypothesis testing with continuous alphabet. Clearly, computing empirical KLD %$D(\hat{\mu}_n||P_0)$ 
for continuous distributions is meaningless as the empirical distribution, % $\hat{\mu}_n$, 
which is discrete, and the nominal distribution $P_0$, which is continuous, have different support sets. Additionally, the asymptotic optimality of Hoeffding's test %(\ref{eq:comparing_discrete_1}) 
was established using a combinatorial argument \cite{Asymptotically} and thus is inapplicable to the continuous case. Attempts to reconstruct a similar decision rule for continuous observations have been largely fruitless with the only exception of the work by Zeitouni and Gutman \cite{largedevi} where large deviation bounds were used in lieu of combinatorial bounds. The results in \cite{largedevi}, however, are obtained at the cost of a weaker optimality with a rather complicated detector.

The difficulty in dealing with continuous observations for universal hypothesis testing stems from the subtle but important distinction on the continuity property of the KLD with respect to the underlying distributions. With finite alphabet distributions, the KLD defined between two distributions is known to be continuous in the distribution functions. This is not the case for the KLD defined between two distributions on the real line, i.e., those with continuous observations \cite{lsc_}. Specifically, weak convergence (i.e., convergence of distribution functions) does not imply convergence of the KLD. As such, even when two distributions who are arbitrarily close in terms of distribution functions, the KLD between them can be arbitrarily large.

This paper defines a robust version of the classical KLD that utilizes the L$\acute{e}$vy metric which, unlike the KLD, is a true distance metric for distributions. 
The robust KLD, defined as the KLD from a distribution to a L$\acute{e}$vy ball of another distribution, is shown to be continuous in the first distribution. This continuity property enables the development of a test for the universal hypothesis testing that is similar in its form to Hoeffding's detector and attains the desired asymptotic NP optimality. Not only is the optimality in a stronger sense than that of \cite{largedevi}, but the test statistic is also much more intuitive and easier to compute.

	The rest of the paper is organized as follows. Section II defines the KLD between two (sets of) distributions; introduces the concepts of weak convergence and the L$\acute{e}$vy metric along with their connections; and review the universal hypothesis testing for the finite alphabet case. % continuous distributions and the large deviation approach by Zeitouni and Gutman \cite{largedevi}. 
Section III defines the robust KLD and establishes the continuity property with respect to weak convergence.  In Section IV, the large deviation approach by Zeitouni and Gutman \cite{largedevi} to the universal hypothesis testing for continuous distributions is first reviewed; a robust version of the universal hypothesis testing problem is then introduced and the asymptotically NP optimal test using the robust KLD is derived. Section V concludes this paper.

%Throughout this paper, we use $\varliminf$ and $\varlimsup$ to denote $\liminf$ and $\limsup$, respectively.
\section{Preliminaries}

\subsection{The KL divergence}

The KLD was first introduced in \cite{SKRAL:1951}. It is often used as a metric to quantify the distance between two probability distributions. For finite alphabets, the KLD between a probability distribution $\mu = (\mu_1, \mu_2, \cdots, \mu_n)$ and another distribution $P = (p_1, p_2, \cdots, p_n)$ is 
	\begin{align}
	D(\mu||P) = \sum_{i=1}^n \mu_i\log\frac{\mu_i}{p_i}. \label{eq:kld1}
	\end{align}
	For distributions defined on the real line $\mathcal{R}$, the KL divergence between $\mu$ and $P$ is defined as
	\begin{eqnarray}
	D(\mu||P) = \int_{\mathcal{R}}d\mu\log\frac{d\mu}{dP}. \label{def:kl2}
	\end{eqnarray}
	
The KLD $D(\mu||P)$ is jointly convex for both discrete and continuous distributions. For two sets of probability distributions, say $\Gamma_1$ and $\Gamma_2$, defined on the same probability space,  the KLD between the two sets is defined to be the infimum of the KLD of all possible pairs of distributions, i.e.,
\begin{equation}
D(\Gamma_1||\Gamma_2):=\inf_{\gamma_1\in\Gamma_1, \gamma_2\in\Gamma_2}D(\gamma_1||\gamma_2). \label{eq:KLDsets}
\end{equation}
%	In addition, $\varliminf$ and $\varlimsup$ denote $\liminf$ and $\limsup$, respectively.
	
\subsection{Weak convergence and the L\'evy metric}

Denote the space of probability distributions on $(\mathcal{R},\mathcal{F})$ as $\mathcal{P}$, where $\mathcal{R}$ is the real line and $\mathcal{F}$ is the sigma-algebra that contains all the Borel sets of $\mathcal{R}$. For $P\in\mathcal{P}$, $P(S)$ is defined for the set $S\in\mathcal{F}$. A clear and simple notation commonly used is $P(t):=P((-\infty,t])$, since $P$ and its corresponding cumulative distribution function (CDF) are equivalent, i.e., one is uniquely determined by the other \cite{spaceproperty}.

Weak convergence is defined to be the convergence of the distribution functions as given below.
	\begin{Def}
		(Weak convergence \cite{pthm, spaceproperty})
		For $P_n$, $P\in\mathcal{P}$, we say $P_n$ weakly converges to $P$ and write $P_n\wto P$, if $P_n(x)\to P(x)$ for all $x$ such that $P$ is continuous at $x$. 
	\end{Def}

	The L$\acute{e}$vy metric $d_L$ between distributions $F\in\mathcal{P}$ and $G\in\mathcal{P}$ is defined as
\[
d_L(F, G):= \inf\{\epsilon: F(x-\epsilon)-\epsilon\le G(x)\le F(x+\epsilon)+\epsilon, \forall x\in\mathcal{R}\}.
\]
	The L$\acute{e}$vy metric makes $(\mathcal{P},d_L)$ a metric space \cite{largebook}, i.e., we have, for $\mu,P,Q\in \mathcal{P}$,
	\begin{eqnarray}
	d_L(\mu,P)=0 &\Leftrightarrow& \mu=P, \nonumber \\
	d_L(\mu,P)&=&d_L(P,\mu), \nonumber \\
	d_L(\mu,P)&\le&d_L(\mu,Q) + d_L(Q, P). \nonumber
	\end{eqnarray}
	The L$\acute{e}$vy ball centered at $P_0\in\mathcal{P}$ with radius $\delta$ is denoted as 
	\begin{eqnarray}
	B_L(P_0,\delta)=\{P\in\mathcal{P}: d_L(P,P_0)\le\delta\}. \label{eq:the levy ball definition}
	\end{eqnarray}
	\begin{figure}[thb!]
		\begin{center}
			\includegraphics[width=3.5in]{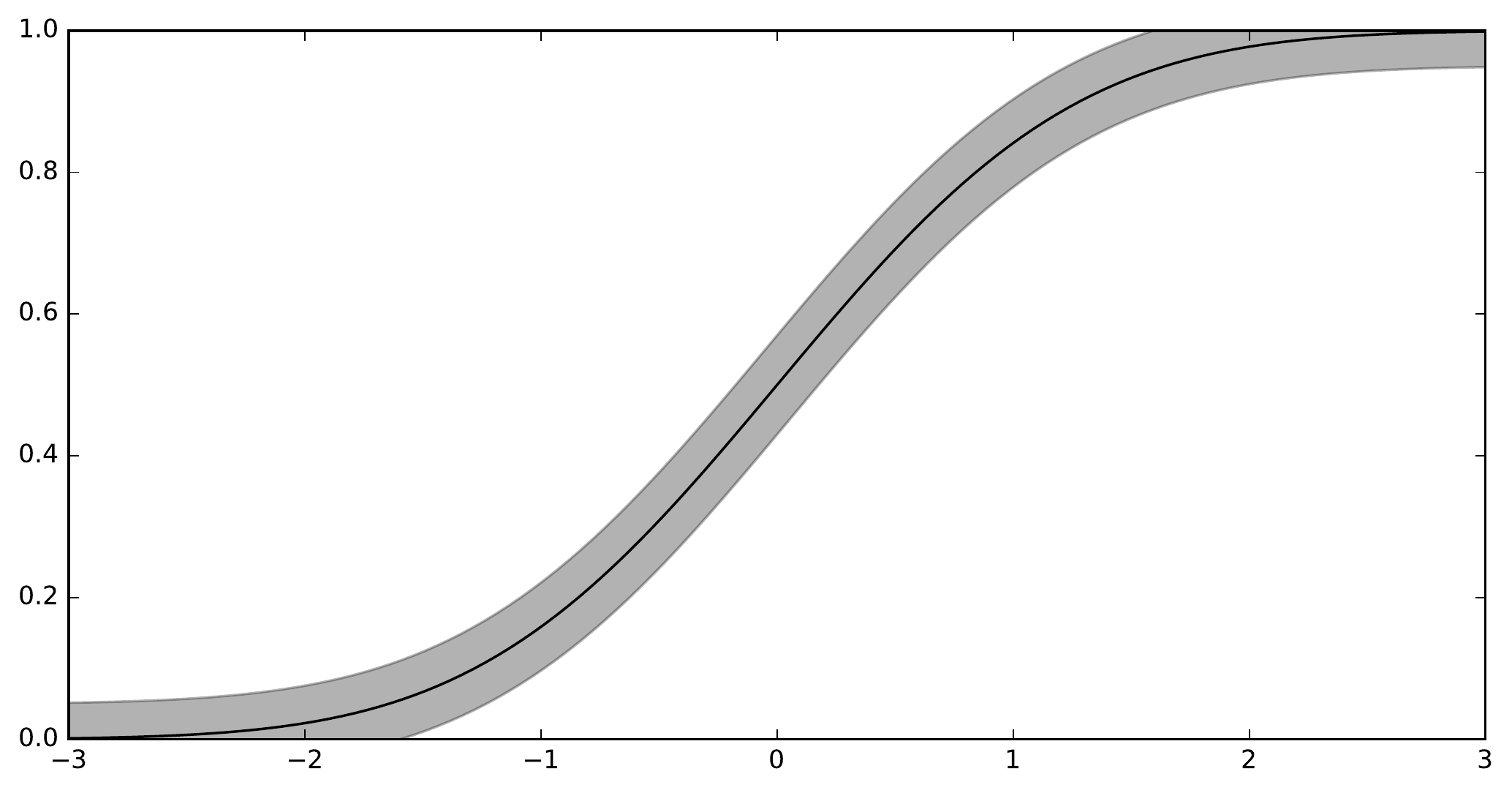}
			\caption[The L$\acute{e}$vy ball centered at standard normal distribution with radius $0.045$.]{The L$\acute{e}$vy ball centered at standard normal distribution with radius $0.045$.}{\label{fig:normalandlevy}}\label{fig:levy}
		\end{center}
	\end{figure}
	Fig.~\ref{fig:levy} plots the CDF of the standard normal distribution and its L$\acute{e}$vy ball with radius $0.045$. A distribution falls inside the shaded area if and only if its distance to the standard normal distribution, as measured by the Levy metric $d_L$, is less than or equal to $0.045$. 
	
		The L$\acute{e}$vy metric is strongly related to the concept of the weak convergence of probability measures. %, the relation is given as follows. 
	\begin{lem}
		\cite{pthm, spaceproperty} For sequences in $\mathcal{P}$ whose limit is also in $\mathcal{P}$, the weak convergence and convergence in the $d_L$ are equivalent, i.e., if $(P_n\in\mathcal{P})$ is a sequence in $\mathcal{P}$ and $P\in\mathcal{P}$, then $P_n\wto P$ iff $d_L(P_n,P)\to0$. 
	\end{lem}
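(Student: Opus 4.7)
The plan is to prove the two implications separately, exploiting the fact that a CDF is monotone, bounded in $[0,1]$, and has at most countably many discontinuities (so continuity points are dense).

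For the easy direction ($d_L(P_n,P)\to 0 \Rightarrow P_n\wto P$), I would fix a continuity point $x$ of $P$ and any $\epsilon>0$, and use the definition of the L\'evy ball to obtain, for all sufficiently large $n$,
\[
P(x-\epsilon)-\epsilon \;\le\; P_n(x) \;\le\; P(x+\epsilon)+\epsilon.
\]
Letting $n\to\infty$ and then $\epsilon\to 0$, continuity of $P$ at $x$ pinches $\lim_n P_n(x)$ to $P(x)$.

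For the converse ($P_n\wto P \Rightarrow d_L(P_n,P)\to 0$), I would fix $\epsilon>0$ and need to show $d_L(P_n,P)\le\epsilon$ eventually. The strategy is to reduce a uniform statement in $x\in\mathcal{R}$ to a finite one using monotonicity. First, since the set of continuity points of $P$ is dense, I would choose continuity points $x_0<x_1<\cdots<x_k$ with consecutive gaps less than $\epsilon$, and with $P(x_0)<\epsilon$ and $P(x_k)>1-\epsilon$ to control the tails. Weak convergence gives $P_n(x_i)\to P(x_i)$ for each $i$, so for $n$ large enough, $|P_n(x_i)-P(x_i)|<\epsilon$ simultaneously for $i=0,\ldots,k$. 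Then, for an arbitrary $x$ lying in some interval $[x_i,x_{i+1}]$, monotonicity together with $x_{i+1}\le x+\epsilon$ and $x_i\ge x-\epsilon$ yields
\[
P_n(x)\le P_n(x_{i+1})\le P(x_{i+1})+\epsilon\le P(x+\epsilon)+\epsilon,
\]
and symmetrically for the lower bound. For $x<x_0$ or $x>x_k$ the tail choices give the required sandwich directly from $0\le P_n(x)\le P_n(x_0)<2\epsilon$ (and similarly near $1$), so $d_L(P_n,P)\le 2\epsilon$ eventually; since $\epsilon$ was arbitrary, this gives the conclusion.

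The main obstacle is the converse direction, and specifically the need to pass from pointwise convergence at finitely many good points to the uniform-in-$x$ L\'evy-ball inequalities. The delicate step is handling the tails and the points of discontinuity of $P$: one must avoid evaluating $P_n$ at a jump of $P$, which is exactly why the density of continuity points (and the resulting freedom to choose $x_0,\ldots,x_k$) is essential. Once the finite partition is in place, the argument is almost mechanical thanks to CDF monotonicity.
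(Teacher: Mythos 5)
The paper offers no proof of this lemma; it is quoted as a known result from the cited references (Billingsley, Lo\`eve). Your argument is correct and is the standard textbook proof: the forward direction by sandwiching $P_n(x)$ at a continuity point of $P$ via the L\'evy-ball inequalities, and the converse by using density of continuity points, monotonicity, and tail control to pass from pointwise convergence at finitely many points to the uniform-in-$x$ inequalities (ending with $d_L(P_n,P)\le 2\epsilon$, which is harmless since $\epsilon$ is arbitrary). Nothing further is needed.
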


\subsection{Universal hypothesis testing}
Let a sequence of independent and identically distributed (i.i.d.) observations $(x_0, \cdots, x_{n-1})=x^n$ be the output of a source $P$.
Consider the following hypothesis test
	\begin{eqnarray}
	\mathcal{H}_0: P=P_0, \ \ \ \ \mathcal{H}_1: P=P',  \label{eq:singleUHT}
	\end{eqnarray}
	where $P_0$ is a known distribution while $P'\neq P_0$ is defined on the same probability space as $P_0$ but is otherwise unknown. The fact that $P'$ can be an arbitrary distribution gives rise to the name universal hypothesis testing. Clearly, while any decision rule will be independent of $P'$, the performance of the decision rule depends on $P'$.

	The universal hypothesis testing was first studied by Hoeffding \cite{Asymptotically} who considered distributions with finite alphabet. 
Hoeffding's detector is equivalent to the following threshold test of the empirical KLD:
	\begin{eqnarray}
	D(\hat{\mu}_n||P_0)\underset{{H_0}}{\overset{H_{1}}{\gtrless}}\eta. \label{detector:Hoeffding}
	\end{eqnarray}
Resorting to combinatorial bounds, Hoeffding successfully established the asymptotical Neyman-Pearson (NP) optimality of the above test. Specifically, let $\phi$ be the sequence of detectors $\{\phi^n(x_0,\cdots,x_{n-1}), n\ge1\}$. Define the error exponents for the two types of error probabilities respectively as follows,
	\begin{eqnarray}
	I^{P'}(\phi):=\liminf_{n\rightarrow\infty}-\frac{1}{n}\log P'^n(\phi^n(x^n)=0),  \nonumber\\
	J^{P_0}(\phi):=\liminf_{n\rightarrow\infty}-\frac{1}{n}\log P_{0}^n(\phi^n(x^n)=1). \nonumber
	\end{eqnarray}
	
Zeitouni and Gutman \cite{largedevi} have shown that to achieve the best trade-off between $I^{P'}$ and $J^{P_0}$, the test can depend on $x^n$ only through the empirical measure $\hat{\mu}_n$, defined to be 
\begin{eqnarray}
\label{eq:empirical}
\hat{\mu}_n(t)=\frac{\sum_iI_{\{x_i\leq t\}}}{n}. \label{eq:empirical}
\end{eqnarray}

%As empirical distributions also belong to same probability space, 
The sequence of detectors can be equivalently expressed using $\Omega$ which is a  sequence of partitions $(\Omega_0(n), \Omega_1(n))\ (n=1,2, \cdots)$ of which $\Omega_0(n)\cap\Omega_1(n)=\emptyset$ and $\mathcal{P}=\Omega_0(n)\cup\Omega_1(n)$. The decision rule is made in favor of $H_i$ if $\hat{\mu}_n\in\Omega_i(n)$, $i=0, 1$.   Therefore $I^{P'}(\phi)$ and $J^{P_0}(\phi)$ can be written as
 	\begin{eqnarray}
 	I^{P'}(\Omega)=\liminf_{n\rightarrow\infty}-\frac{1}{n}\log P'^n(\hat{\mu}_n\in\Omega_0(n)),  \nonumber \\
 	J^{P_0}(\Omega)=\liminf_{n\rightarrow\infty}-\frac{1}{n}\log P_{0}^n(\hat{\mu}_n\in\Omega_1(n)).\nonumber
 	\end{eqnarray}

The generalized NP criterion maximizes the exponent of type II probability of error under a constraint on the minimal rate of decrease in type I probability of error:
	\begin{eqnarray}
	\label{eq:asmset}
	\max_{\Omega}I^{P'}(\Omega) \ \ \text{\ s.t.\ } \ 
	J^{P_0}(\Omega)\ge\eta. \label{eq:generalizedNP}
	\end{eqnarray}
Thus, among all sequences of detectors that satisfy the constraint on the type I error exponent in (\ref{eq:generalizedNP}), Hoeffding's test in (\ref{detector:Hoeffding}) maximizes the type II error exponent. 
%Hoeffding's results, however, do not apply to universal hypothesis testing with continuous distributions. 

%Hoeffding's result, however, does not generalize to the universal hypothesis testing with distributions defined on the real line. 
%	Zeitouni and Gutman \cite{largedevi} have shown that to achieve the best trade-off between $I^{P'}$ and $J^{P_0}$, the test can depend on $x^n$ only through the empirical measure $\hat{\mu}_n$, which is defined by
%	\begin{eqnarray}
%	\label{eq:empirical}
%	\hat{\mu}_n(t)=\frac{\sum_iI_{\{x_i\leq t\}}}{n}. \nonumber
%	\end{eqnarray}
% 	All possible empirical distributions belong to $\mathcal{P}$. Therefore $I^{P'}(\phi)$ and $J^{P_0}(\phi)$ can be written as
% 	\begin{eqnarray}
% 	I^{P'}(\Omega)=\varliminf_{n\rightarrow\infty}-\frac{1}{n}\log P'^n(\hat{\mu}_n\in\Omega_0(n)),  \nonumber \\
% 	J^{P_0}(\Omega)=\varliminf_{n\rightarrow\infty}-\frac{1}{n}\log P_{0}^n(\hat{\mu}_n\in\Omega_1(n)).\nonumber
% 	\end{eqnarray}
% 	
% 	Here $\Omega$ is a sequence of partitions $(\Omega_0(n), \Omega_1(n))\ (n=1,2, \cdots)$ of which $\Omega_0(n)\cap\Omega_1(n)=\emptyset$ and $\mathcal{P}=\Omega_0(n)\cup\Omega_1(n)$. The decision rule is made in favor of $H_i$ if $\hat{\mu}_n\in\Omega_i(n)$, $i=0, 1$. 	
	
\section{Robust KLD}
\subsection{Continuity property of the KLD}\label{subsec:continuityA}
Let the nominal distribution be $P_0$. Let $\mu$ and $\mu_k$, $k=1,2,\cdots$, be distributions with the same sample space as $P_0$. Suppose the sequence of distributions $\mu_k$ converge weakly to $\mu$. It is of interest to study whether the corresponding KLD between $\mu_k$ and $P_0$ also converge to the KLD between $\mu$ and $P_0$. That is,
does $\mu_k \wto \mu$ imply $D(\mu_k||P_0)\rightarrow D(\mu||P_0)$? 

The statement is true if the distributions involved are defined on a finite alphabet. With finite elements in the sample space of $P_0$, $D(\mu||P_0)$ is continuous in $\mu$ that has the same sample support as $P_0$. Convergence in distribution implies convergence in the corresponding KLD. 

This, however, is not the case for $P_0\in\mathcal{P}$. Indeed, it was established in  \cite{lsc_} that the KLD is only lower semicontinuous with respect to the weak convergence for the continuous case, i.e., 
\[
D(\mu || P_0) \leq \liminf_{k\rightarrow \infty} D(\mu_k || P_0).
\]
However, the KLD is not upper semicontinuous, i.e., the following result is not necessarily true:
\[
D(\mu || P_0) \geq \limsup_{k\rightarrow \infty} D(\mu_k || P_0),
\]
thus the KLD is not continuous in $\mu$ for continuous observations. To see this, let $\mu=P_0$ thus $D(\mu||P_0)=0$. Choose a distribution, say $P$, that is not absolutely continuous with respect to $P_0$.  Let $$P_{\epsilon}=(1-\epsilon)P_0+\epsilon P,$$ then $P_{\epsilon}$ is also not absolutely continuous with respect to $P_0$, thus $D(P_{\epsilon}||P_0)$ is unbounded for any given $\epsilon>0$. However, $P_{\epsilon}$ weakly converges to $P_0$ as $\epsilon\rightarrow 0$. While this construction takes advantage of the distributions that are not absolutely continuous with respect to $P_0$, the same is true even if one is constrained to a sequence of distributions that is absolutely continuous with respect to $P_0$.

This lack of continuity for the KLD for the continuous case is the primary reason for the difficulty in generalizing Hoeffding's result to the universal hypothesis testing with continuous observations. A direct consequence of the lack of continuity is that the superlevel set defined by the KLD is not closed \cite{largedevi}. The superlevel set is given by
	\begin{eqnarray}
	\{\mu\in\mathcal{P}: D(\mu||P_0)\ge\eta_1\}. \label{eq:disjoint2}
	\end{eqnarray}

The fact that the KLD is not continuous in $\mu$ leads to the unexpected property that the closure of the above set encompasses the entire probability space, i.e., any $P$ that does not belong to the above superlevel set has a sequence of distributions in the set that weakly converge to $P$. As such, a test in a similar form as (\ref{detector:Hoeffding}) can not be used for continuous distributions.

\subsection{Robust KLD and its continuity property}

Given a pair of distributions $(\mu,P_0)$, the robust KLD is defined to be the KLD between $\mu$ and the L$\acute{e}$vy ball centered at $P_0$. Using the definition (\ref{eq:KLDsets}), the robust KLD is the KLD defined between the two sets $\{\mu\}$ and $B_L(P_0,\delta_0)$ where $\delta_0>0$ is the radius of the L$\acute{e}$vy ball. We denote it simply as $D(\mu||B_L(P_0,\delta_0))$. 

The following theorem establishes its continuity property in $\mu$ under some mild assumptions. %, which is elaborated in the following Theorem \ref{thm:continuous}.
	
	\begin{thm}
		\label{thm:continuous}
		For a distribution $P_0\in\mathcal{P}$, if $P_0(t)$ is continuous in $t$, then for any $\delta_0>0$, $D(\mu||B_L(P_0,\delta_0))$ is continuous in $\mu$ with respect to the weak convergence. 
	\end{thm}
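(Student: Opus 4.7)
The plan is to establish continuity of $F(\mu) := D(\mu || B_L(P_0,\delta_0))$ by separately proving lower and upper semicontinuity of $F$ in $\mu$ under weak convergence.

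For lower semicontinuity, I would fix $\mu_k \wto \mu$, assume without loss of generality that $\liminf_k F(\mu_k) = L < \infty$, and pick near-minimizers $Q_k \in B_L(P_0,\delta_0)$ with $D(\mu_k || Q_k) \le F(\mu_k) + 1/k$. The crucial step is producing a weakly convergent subsequence of $\{Q_k\}$. Membership in the L\'evy ball yields the tail bounds $F_{P_0}(x-\delta_0)-\delta_0 \le F_{Q_k}(x) \le F_{P_0}(x+\delta_0)+\delta_0$, so tails of $Q_k$ exceed those of $P_0$ by at most $\delta_0$. Combining this with tightness of $\{\mu_k\}$ (automatic from $\mu_k \wto \mu$) and the uniform bound on $D(\mu_k || Q_k)$, via a Donsker--Varadhan-type inequality, yields tightness of $\{Q_k\}$. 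Passing to a weakly convergent subsequence $Q_{k_j} \wto Q^\star$, the fact that $d_L$ metrizes weak convergence (cited lemma) makes $B_L(P_0,\delta_0)$ closed, so $Q^\star \in B_L(P_0,\delta_0)$. Joint lower semicontinuity of the KLD then yields
\[
F(\mu) \;\le\; D(\mu || Q^\star) \;\le\; \liminf_j D(\mu_{k_j} || Q_{k_j}) \;=\; L.
\]

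For upper semicontinuity, the main obstacle, I would first pick an $\epsilon$-optimizer $Q^\star$ that sits strictly inside the ball, i.e.\ in $B_L(P_0,\delta_0-\epsilon)$ with $D(\mu || Q^\star) \le F(\mu) + \epsilon$; this uses convexity of $B_L$ and of $D(\mu||\cdot)$, plus a short argument that $F$ is continuous in the radius parameter. The goal is then to construct $Q_k \in B_L(P_0,\delta_0)$ with $D(\mu_k || Q_k) \to D(\mu || Q^\star)$. The natural route is to transport $Q^\star$ by a measurable map $T_k$ that pushes $\mu$ forward to $\mu_k$ (for example $T_k = F_{\mu_k}^{-1} \circ F_{\mu}$, with an appropriate interpretation at discontinuities), chosen so that $T_k \to \mathrm{id}$ in the L\'evy metric as a consequence of $\mu_k \wto \mu$. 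Because the KLD is invariant under pushforward by a measurable bijection, $D(\mu_k || T_k\sharp Q^\star) = D(\mu || Q^\star)$; the continuity of $F_{P_0}$ and the built-in $\epsilon$ slack guarantee that $T_k\sharp Q^\star \in B_L(P_0,\delta_0)$ for large $k$. This gives $F(\mu_k) \le D(\mu || Q^\star) \le F(\mu) + \epsilon$, and sending $\epsilon \to 0$ closes the USC bound.

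The principal difficulty is this upper-semicontinuity step: producing a transport $T_k$ that simultaneously pushes $\mu$ to $\mu_k$ and moves $Q^\star$ by a controllable amount in $d_L$ is delicate when $\mu$ itself has an irregular CDF, and may require either a monotone-rearrangement argument under additional regularity or a Skorokhod-representation construction in the general case. The continuity hypothesis on $F_{P_0}$ is essential here: it is precisely what guarantees that a small slack in the L\'evy radius translates into a genuine relaxation of the constraint, ruling out the atom-driven pathologies of Section~III-A that caused the ordinary KLD to fail to be upper semicontinuous. I expect packaging this transport and verifying the ancillary continuity-in-radius of $F$ to constitute the bulk of the technical work.
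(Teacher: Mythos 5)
Your overall decomposition (lower plus upper semicontinuity) matches the paper's, but both halves have genuine gaps. In the LSC half, the claim that $\{Q_k\}$ is tight is false: membership in $B_L(P_0,\delta_0)$ only pins the tails of $Q_k$ down to within $\delta_0$, so up to $2\delta_0$ of mass may escape to $\pm\infty$; and a uniform bound on $D(\mu_k\|Q_k)$ together with tightness of $\{\mu_k\}$ does not control $Q_k$ where $\mu_k$ has little mass (take $Q_k=(1-\delta_0)R+\delta_0\,\nu_k$ with $\nu_k$ drifting to infinity: the divergence stays bounded by $\log\frac{1}{1-\delta_0}$ plus the original value). The limit points of $\{Q_k\}$ are in general only sub-probability (defective) measures, and $B_L(P_0,\delta_0)$ is not weakly closed in $\mathcal{P}$. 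The paper circumvents exactly this by compactifying to the space $\mathcal{M}$ of finitely additive normalized measures, proving that the infimum over the enlarged ball $\bar B_L$ equals the infimum over $B_L$ (its Lemma~\ref{lem:pequalm}, which also supplies the existence of exact minimizers that your radius-continuity and $\epsilon$-optimizer steps silently assume). Your LSC argument is repairable along those lines, but the repair is a substantive piece of the proof, not a remark.

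The USC half is where your route diverges from the paper's, and the transport step does not go through as described. Requiring $T_k\sharp\mu=\mu_k$ constrains $T_k$ only $\mu$-almost everywhere (and a deterministic $T_k$ need not even exist when $\mu$ has atoms that $\mu_k$ lacks; one must pass to Markov kernels, under which the data-processing inequality still gives the needed direction). More seriously, $\mu_k\wto\mu$ gives $F_{\mu_k}(x)\le F_\mu(x+\epsilon)+\epsilon$, and that additive $\epsilon$ corresponds to a set of $\mu$-measure at most $\epsilon$ on which $T_k=F_{\mu_k}^{-1}\circ F_\mu$ can move points arbitrarily far; likewise $T_k$ is completely unconstrained off the support of $\mu$. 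Since the near-optimizer $Q^\star$ satisfies $\mu\ll Q^\star$ but not conversely --- and the Lévy-ball constraint relative to $P_0$ typically \emph{forces} $Q^\star$ to carry substantial mass where $\mu$ has little or none --- $T_k\sharp Q^\star$ can relocate a non-negligible fraction of $Q^\star$'s mass by a large distance and exit $B_L(P_0,\delta_0)$ no matter how small the slack $\epsilon$ is. This is not a technicality you can absorb into ``regularity assumptions,'' because the theorem must hold at arbitrary $\mu\in\mathcal{P}$ (the application is to empirical measures). The paper avoids transport altogether: it quantizes, uses convexity of $\mu\mapsto D(\mu\|B_L(P_0,\delta_0))$ to reduce the supremum over the perturbation ball $B_L(\mu_0,\delta)$ to the one-parameter family of extreme points $\mu_x^\delta$ (Lemma~\ref{lem:suptor}), and then exhibits for each $x$ an explicit competitor $P^x_{\delta_0-\delta_1}$ obtained by shifting and scaling the minimizer for a slightly smaller radius, bounding $D(\mu_x^\delta\|P^x_{\delta_0-\delta_1})$ uniformly in $x$ via the log-sum inequality. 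If you want to salvage the transport idea you would need to show that a near-optimal $Q^\star$ can be chosen with $Q^\star$ controlled by $\mu$ off a set of small $Q^\star$-mass, which is essentially as hard as the paper's construction.
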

	%\begin{proof}
	The non-trivial part of the proof is to show that $D(\mu||B_L(P_0,\delta_0))$ is upper semicontinuous in $\mu$ (Lemma \ref{lem:uppersemi}). Lemma \ref{lem:lowersemi} proves $D(\mu||B_L(P_0,\delta_0))$ is lower semicontinuous in $\mu$. Therefore, $D(\mu||B_L(P_0,\delta_0))$ is continuous in $\mu$. The complete proof is lengthy and is included in Appendix \ref{app:continuous}. Important intermediate steps are summarized below.
	\begin{enumerate}
		\item
		We first partition (quantize) the real line into a set of finite intervals. The robust KLD corresponding to the quantized distributions converge to the true robust KLD as the quantization becomes finer. The proof is in essence proving that a max-min inequality is in fact an equality (Lemma A.1).
		\item
		The robust KLD is defined as the infimum over a L\'evy ball and it is established that there exists a distribution inside or on the surface of the L\'evy ball that achieves the infimum (see proof of Lemma A.1).
		\item
		The robust KLD is continuous in the radius of the L\'evy ball (Lemma A.2).
		\item
%		The robust KLD and the quantized robust KLD are convex functions of $\mu$ and $\mu^\mathcal{A}$, respectively (Lemma A.3). Here $\mu^{\mathcal{A}}$ is the quantized distribution of $\mu$ with respect to the partition $\mathcal{A}$ of the real line.
		The robust KLD and the quantized robust KLD are convex functions of the respective distributions (Lemma A.3). 
		\item
		The supremum of the robust KLD over a L\'evy ball centered at the first distribution is achieved by a distribution whose distribution function consists of two parts with a single transition point: the first part (i.e., prior to the transition point) corresponds to the lower bound of the L\'evy ball and the second part (i.e., after the transition point) corresponds to the upper bound of the L\'evy ball. Thus the class of distributions so defined is determined by the transition point given the  L\'evy ball. As such, the problem of finding an optimal distribution is reduced to finding an optimal transition point (Lemma A.4). 
		\item
		The robust KLD is bounded with  (Lemma A.5)
$$\sup_{\mu, P_0\in\mathcal{P}}D(\mu||B_L(P_0,\delta_0))=\log\frac{1}{\delta_0}.$$
		\item
		The supremum of the robust KLD over a L\'evy ball converges to the robust KLD as the L\'evy ball diminishes, i.e., as its radius goes to $0$. Therefore, the robust KLD is upper semicontinuous (Lemma A.6).
		\item
		The robust KLD is lower semicontinuous (Lemma A.7).
	\end{enumerate}
%	\end{proof}
%	\vspace{-10pt}
	The intuition of the continuity property of the robust KLD is the following. The classical KLD is a function of two distributions, and its value may vary arbitrarily large with small perturbation in one of the distributions with respect to the L\'evy metric. The reason is because the L\'evy metric is strictly weaker than the KLD, i.e., convergence in the KLD necessarily implies convergence in the L\'evy metric but {\em not} the other way around. % thus closeness in the L\'evy metric does not imply closeness in the KLD. 
For the robust KLD where the KLD is defined between the first distribution and a L\'evy ball centered around the second distribution, small perturbations in the first distribution can now be tolerated by the L\'evy ball around the second distribution, thanks again to the fact that the L\'evy metric is strictly weaker than the KLD.

\subsection{Discussions}

The continuity property in Theorem \ref{thm:continuous} does not hold if the distribution ball is constructed using some other measures, including the total variation and the KLD. 
	
	\begin{Def}
		\cite{TV} The total variation between $P\in\mathcal{P}$ and $Q\in\mathcal{P}$ is $d_{TV}(P,Q):=\sup_{S\in\mathcal{F}}|P(S)-Q(S)|$.
	\end{Def}
	
	Let $P_0(t)$ be continuous in $t$ and denote by $B_{TV}(P_0,\delta_0)$ and $B_{KL}(P_0,\delta_0)$ distribution balls defined using the total variation and the KLD, respectively, in a manner similar to that of the L\'evy ball in (\ref{eq:the levy ball definition}). We show that there exist a sequence $P_{n}$ that weakly converge to $P_0$, yet neither $D(P_{n}||B_{TV}(P_0,\delta_0))$ nor $D(P_{n}||B_{KL}(P_0,\delta_0))$ converges to $0$. %$D(P_0||B_{TV}(P_0,\delta_0))$ and $D(P_0||B_{KL}(P_0,\delta_0))$, respectively. 
For any $n>0$,  choose a $P_{n}\in B_L(P_0,1/n)$ such that $P_{n}(t)$ is a step function, that is, the distribution function is a staircase function throughout the entire real line. Let $S_{n}:=\{x\in\mathcal{R}: P_{n}(x)-P_{n}(x-)>0\}$, $S_{n}$ is the set of all jump points of $P_{n}(t)$.
	
For the total variation case,  we have %from the data processing inequality,
		\begin{eqnarray}
%		&&D\left(P_{n}||B_{TV}(P_0,\delta_0)\right) \nonumber \\
D\left(P_{n}||B_{TV}(P_0,\delta_0)\right) 	&=&\inf_{\{P\in B_{TV}(P_0,\delta_0)\}}D(P_{n}||P) \nonumber \\
		&\ge&\inf_{\{P\in B_{TV}(P_0,\delta_0)\}}P_{n}(S_{n})\log\frac{P_{n}(S_{n})}{P(S_{n})}+  P_{n}(S_{n}^c)\log\frac{P_{n}(S_{n}^c)}{P(S_{n}^c)} \nonumber \\
		&=&\inf_{\{P\in B_{TV}(P_0,\delta_0)\}}1\log\frac{1}{P(S_{n})}+0\log\frac{0}{P(S_{n}^c)} \nonumber \\
		&\ge&1\log\frac{1}{P_0(S_{n})+\delta_0}\nonumber \\
		&=&\log\frac{1}{\delta_0}, \nonumber
		\end{eqnarray}
where the first inequality is due to the data processing inequality of the KLD and the second inequality comes from the definition of $B_{TV}(P_0,\delta_0)$.		Therefore,
		\begin{eqnarray}
		\liminf_{n\to\infty}D(P_{n}||B_{TV}(P_0,\delta_0))&\ge&\log\frac{1}{\delta_0} \nonumber\\
		&>&D(P_0||B_{TV}(P_0,\delta_0))\nonumber\\
&=&0. \nonumber
		\end{eqnarray}

		Thus $D(P_{n}||B_{TV}(P_0,\delta_0))\nrightarrow D(P_0||B_{TV}(P_0,\delta_0))$ even though $P_{n}\wto P_0$. 

		As for the KLD case, for any $P$ such that $D(P||P_0)\le\delta_0$, $D(P_{n}||P)=\infty$ where $P_n$ is constructed in the same manner as above. Therefore, $D(P_{n}||B_{KL}(P_0,\delta_0))\nrightarrow D(P_0||B_{KL}(P_0,\delta_0))=0$.

	The assumption that $P_0(t)$ is continuous in $t$ is also necessary for the continuous property of the robust KLD to hold. We construct the following example to illustrate this point. Let $P_0$ be the distribution that $P_0(t)=0$ for $t<0$ and $P_0(t)=1$ for $t\ge0$, i.e., it is a degenerate random variable that equals to $0$ with probability $1$. Let $\mu_i$ be the distribution such that $\mu_i(t)=0$ for $t<0.5+\frac{1}{i}$ and $\mu_i(t)=1$ for $t\ge0.5+\frac{1}{i}$. Thus $\mu_i\wto\mu$ as $i\to\infty$, where $\mu(t)=0$ for $t<0.5$ and $\mu(t)=1$ for $t\ge0.5$. We can see that $D(\mu||B_L(P_0,0.5))=0$ since $\mu\in B_L(P_0,0.5)$. As $\mu_i\wto\mu$,
	\begin{eqnarray}
	\lim_{i\to\infty}D(\mu_i||B_L(P_0,0.5))&=&\lim_{i\to\infty}\log\frac{1}{0.5} \nonumber \\
	&>&D(\mu||B_L(P_0,0.5)), \nonumber 
	\end{eqnarray}
and the distribution in $B_L(P_0,0.5)$ achieving the KLD value of $\log 2$ is a degenerate one: it takes values of the two points $0.5$ and $0.5+1/i$ with equal probability.

 The proof of Theorem \ref{thm:continuous} sheds some light on the dynamics of the KLD of continuous distributions. Furthermore, the established continuity property of the robust KLD provides is key to solving the robust version of the universal hypothesis testing problem for the continuous case. This will be elaborated in the following section.

\section{Robust Universal Hypothesis Testing}

\subsection{Review of the large deviation approach}
While Hoeffding's test do not apply to distributions with continuous observations, Zeitouni and Gutman \cite{largedevi} developed a universal hypothesis test for distributions defined on the real line under a strictly weaker notion of optimality. Their approach relies on the large deviation theory, specifically, the general Sanov's theorem. For a given set $\Gamma\subset\mathcal{P}$, denote the closure and interior sets  of $\Gamma$ as $cl\Gamma$ and $int\Gamma$.
	\begin{thm}[General Sanov's Theorem]
		\label{thm:sanov}
		\cite{largebook} Given a probability set $\Gamma\subseteq\mathcal{P}$, for a probability measure $Q\notin\Gamma,$
		\begin{eqnarray}
		\inf_{P\in cl\Gamma}D(P||Q)&\le&\liminf_{n\rightarrow\infty}-\frac{1}{n}\log Q(\{x^n: \hat{\mu}_n\in\Gamma\}) \nonumber \\
		&\le&\limsup_{n\rightarrow\infty}-\frac{1}{n}\log Q(\{x^n: \hat{\mu}_n\in\Gamma\}) \nonumber\\
		&\le&\inf_{P\in int\Gamma}D(P||Q),\nonumber
		\end{eqnarray}
where $ \hat{\mu}_n$ is the empirical distribution defined in (\ref{eq:empirical}).
	\end{thm}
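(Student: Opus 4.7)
The plan is to establish the two non-trivial inequalities separately: a lower bound on $Q^n(\hat{\mu}_n\in\Gamma)$, which yields the upper bound on $\limsup$, and an upper bound on the same probability, which yields the lower bound on $\liminf$. Because the finite-alphabet method-of-types arguments are unavailable on the real line, I would proceed via the classical change-of-measure and exponential-Markov techniques that underlie large deviation theory on Polish spaces.

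For the upper bound on $\limsup$, I would pick any $P\in int\Gamma$ with $D(P||Q)<\infty$ (the claim is vacuous otherwise). Since $int\Gamma$ is weakly open, it contains a weak neighborhood $U$ of $P$. The change-of-measure identity
\[
Q^n(\hat{\mu}_n\in U)=E_{P^n}\!\left[\mathbbm{1}_{\{\hat{\mu}_n\in U\}}\exp\!\Big(-\sum_{i=0}^{n-1}\log\tfrac{dP}{dQ}(x_i)\Big)\right],
\]
combined with the strong law of large numbers under $P^n$, which yields simultaneously $\hat{\mu}_n\wto P$ almost surely and $\tfrac{1}{n}\sum_i\log\tfrac{dP}{dQ}(x_i)\to D(P||Q)$ almost surely, produces $Q^n(\hat{\mu}_n\in U)\ge \tfrac12\,e^{-n(D(P||Q)+\epsilon)}$ for all large $n$. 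Taking $-\tfrac{1}{n}\log$, letting $n\to\infty$ and then $\epsilon\to 0$, and finally taking the infimum over $P\in int\Gamma$, delivers the stated bound.

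For the lower bound on $\liminf$, I would invoke the Donsker--Varadhan variational representation $D(P||Q)=\sup_{f\in C_b(\mathcal{R})}\{\int f\,dP-\log\int e^f\,dQ\}$. An exponential Chebyshev step gives, for each bounded continuous $f$,
\[
Q^n(\hat{\mu}_n\in\Gamma)\le\exp\!\Big(-n\inf_{P\in cl\Gamma}\big(\textstyle\int f\,dP-\log\int e^f\,dQ\big)\Big).
\]
A covering argument then transfers the supremum over $f$ inside the infimum over $cl\Gamma$: cover $cl\Gamma$ by finitely many weak neighborhoods, bound each one separately with a well-chosen $f$, and sum the bounds. The lower semicontinuity of $D(\cdot||Q)$ noted in Section III-A ensures that measures lying near the infimum produce rate values close to $\inf_{P\in cl\Gamma} D(P||Q)$.

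The main obstacle is the minimax exchange underlying the lower bound. Because $\mathcal{P}$ under the weak topology is not compact when the underlying alphabet is $\mathcal{R}$, $cl\Gamma$ need not be compact either, and a direct application of a minimax theorem fails. The standard remedy is to prove exponential tightness of $\hat{\mu}_n$ under $Q^n$, namely to exhibit compact $K_M\subset\mathcal{P}$ with $\liminf -\tfrac{1}{n}\log Q^n(\hat{\mu}_n\notin K_M)\ge M$ for every $M$, and then to reduce the covering argument to $cl\Gamma\cap K_M$. This exponential tightness, absent from the finite-alphabet setting where $\mathcal{P}$ is itself compact, is the technical ingredient that makes the continuous case substantially harder.
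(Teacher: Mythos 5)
The paper does not prove this theorem: it is quoted from Dembo and Zeitouni \cite{largebook} and used as an external tool, so there is no in-paper proof to compare against. Your sketch is a correct outline of one standard proof. The lower bound on $Q^n(\hat{\mu}_n\in\Gamma)$ (hence the upper bound on the $\limsup$) via tilting to a fixed $P\in int\Gamma$ with $D(P\|Q)<\infty$, the strong law for $\tfrac{1}{n}\sum_i\log\tfrac{dP}{dQ}(x_i)$, and Varadarajan's theorem giving $\hat{\mu}_n\wto P$ almost surely is exactly the classical argument. The upper bound via the Donsker--Varadhan representation, an exponential Chebyshev bound for each fixed $f\in C_b(\mathcal{R})$, a finite cover of $cl\Gamma\cap K_M$ by weak neighborhoods on each of which a single well-chosen $f$ nearly attains the supremum, and exponential tightness of $\hat{\mu}_n$ under $Q^n$ to dispose of $K_M^c$, is likewise standard; you correctly identify the minimax exchange as the crux and exponential tightness as the ingredient replacing the compactness of $\mathcal{P}$ enjoyed in the finite-alphabet case. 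This route differs from the proof in the cited reference, which establishes Sanov's theorem in the finer $\tau$-topology by projecting onto finite partitions and invoking the finite-alphabet result through the Dawson--G\"artner projective-limit theorem; that approach buys the stronger topology and avoids constructing test functions, whereas yours is more self-contained at the level of the weak topology. Two small points to tighten: the change-of-measure identity holds only as an inequality in the needed direction unless $Q\ll P$, and the upper bound requires truncating the rate at a level $M$ when $\inf_{P\in cl\Gamma}D(P\|Q)=\infty$. Neither affects the validity of the plan.
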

	
	The general Sanov's Theorem illustrates the large deviation principle for the empirical measures and is used extensively in the proof of Theorems \ref{thm:largedevi} and \ref{thm:robustuniversal}. For any set $\Gamma\subseteq\mathcal{P}$, define its $\delta-$smooth set to be $$\Gamma^{\delta}:=\cup_{P\in\Gamma}\{\mu\in\mathcal{P}:d_L(\mu, P)<\delta\}.$$
	The major contribution in \cite{largedevi} is summarized in the Theorem below.
	\begin{thm}
		\label{thm:largedevi}
		\cite{largedevi} Define $\Lambda$ as,
		\small
		\begin{eqnarray}
		\Lambda_1(n)=\Lambda_1:=\{\mu: D(B_L(\mu,2\delta)||P_0)\ge\eta\}^{\delta}, \ \ \Lambda_0:=\mathcal{P}\setminus \Lambda_1. \label{eq:largedevidetector}
		\end{eqnarray}
		\normalsize
		$\Lambda$ is $\delta-$optimal, i.e.,
		\begin{enumerate}
			\item
			$J^{P_0}(\Lambda)\ge\eta$.
			\item
			If $\Omega$ is a test such that $J^{P_0}(\Omega^{6\delta})\ge\eta$, then for any $P'\neq P_0$,
			\begin{eqnarray}
			I^{P'}(\Omega^{\delta})\le I^{P'}(\Lambda).
			\end{eqnarray}
		\end{enumerate}
	\end{thm}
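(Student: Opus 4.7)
The plan is to convert both error exponents into infima of the ordinary KLD over subsets of $\mathcal{P}$ using the General Sanov's Theorem (Theorem~\ref{thm:sanov}), and then compare the resulting sets using the L\'evy-metric geometry built into $\Lambda$.

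For the type--I constraint $J^{P_0}(\Lambda)\ge\eta$, I apply the Sanov lower bound with $\Gamma=\Lambda_1$ and $Q=P_0$ to get $J^{P_0}(\Lambda)\ge \inf_{\mu\in cl\,\Lambda_1}D(\mu||P_0)$, and then argue that every $\mu\in\Lambda_1$ already satisfies $D(\mu||P_0)\ge\eta$: by the definition of $\Lambda_1$ there exists $\nu_0$ with $d_L(\mu,\nu_0)<\delta$ and $D(B_L(\nu_0,2\delta)||P_0)\ge\eta$, and since $\mu$ itself lies in $B_L(\nu_0,2\delta)$ we obtain $D(\mu||P_0)\ge \inf_{P\in B_L(\nu_0,2\delta)}D(P||P_0)\ge\eta$. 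Passing from $\Lambda_1$ to its closure uses only the lower semicontinuity of $D(\cdot||P_0)$ recorded in Section~\ref{subsec:continuityA}, which makes the superlevel set $\{\mu:D(\mu||P_0)\ge\eta\}$ closed.

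For the optimality half, Sanov's bounds give $I^{P'}(\Omega^\delta)\le \inf_{\mu\in\Omega_0^\delta}D(\mu||P')$ (since $\Omega_0^\delta$ is open, so equal to its interior) and $I^{P'}(\Lambda)\ge \inf_{\mu\in\Lambda_0}D(\mu||P')$ (since $\Lambda_0$ is closed, as the complement of the open set $\Lambda_1$). It is therefore enough to establish the set inclusion $\Lambda_0\subseteq\Omega_0^\delta$. To do so, pick $\nu\in\Lambda_0$. Because $\nu\notin\Lambda_1$, in particular $D(B_L(\nu,2\delta)||P_0)<\eta$, so there exists $Q\in B_L(\nu,2\delta)$ with $D(Q||P_0)<\eta$. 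On the other hand, applying the Sanov upper bound to the hypothesis $J^{P_0}(\Omega^{6\delta})\ge\eta$ forces $D(\mu||P_0)\ge\eta$ for every $\mu$ in the open set $\Omega_1^{6\delta}$; hence $Q\notin\Omega_1^{6\delta}$, i.e.\ $d_L(Q,\Omega_1)\ge 6\delta$. The L\'evy triangle inequality then yields $d_L(\nu,\Omega_1)\ge 6\delta-2\delta=4\delta>0$, so $\nu\in\Omega_0\subseteq\Omega_0^\delta$, completing the inclusion.

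The main technical hazard is the bookkeeping of the two L\'evy radii ($2\delta$ baked into the definition of $\Lambda_1$ and $6\delta$ appearing in the hypothesis) together with the closed/open distinctions between the two sides of Sanov's theorem; both radii must be chosen so that the triangle-inequality step leaves a strictly positive slack, which is precisely what forces the factor of six. A secondary point, easily handled, is that I never need the infimum $D(B_L(\nu,2\delta)||P_0)$ to be \emph{attained}: any $Q\in B_L(\nu,2\delta)$ with $D(Q||P_0)<\eta$ suffices, so this part of the argument does not invoke the minimizer-existence result proved elsewhere in the paper.
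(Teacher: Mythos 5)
First, note that the paper does not actually prove this theorem: it is quoted from Zeitouni--Gutman \cite{largedevi}, so there is no in-paper proof to compare against. Evaluating your argument on its own merits: the overall architecture (Sanov in both directions, then set inclusions driven by the L\'evy geometry) is the right one, and your Part~2 is essentially sound --- the chain $Q\in B_L(\nu,2\delta)$, $D(Q||P_0)<\eta$, $Q\notin\Omega_1^{6\delta}$, hence $d_L(\nu,\Omega_1)\ge 6\delta-2\delta>0$ and $\nu\in\Omega_0\subseteq\Omega_0^\delta$ is correct, modulo the fact that you tacitly treat $\Omega_1$ as independent of $n$ when applying Sanov's theorem to it (the theorem allows a sequence of tests, which requires extra care).

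There is, however, a genuine gap in Part~1, at the passage from $\Lambda_1$ to $cl\,\Lambda_1$. You claim that lower semicontinuity of $D(\cdot||P_0)$ makes the superlevel set $\{\mu: D(\mu||P_0)\ge\eta\}$ closed. This is false, and the paper says so explicitly in Section~\ref{subsec:continuityA}: the closure of that set is \emph{all} of $\mathcal{P}$ (lower semicontinuity closes sublevel sets $\{D\le c\}$ and opens $\{D>c\}$; it says nothing about $\{D\ge c\}$, and the failure of upper semicontinuity is exactly why this set is not closed). If your claim were true, the $\delta$-smoothing in the definition of $\Lambda_1$ --- and indeed the robust KLD developed in this paper --- would be unnecessary. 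Consequently the inclusion $\Lambda_1\subseteq\{D(\cdot||P_0)\ge\eta\}$ does not transfer to closures, and your bound $\inf_{\mu\in cl\Lambda_1}D(\mu||P_0)\ge\eta$ is unsupported as written. The step can be repaired, but only by exploiting the slack deliberately built into the definition of $\Lambda_1$: if $\mu\in cl\,\Lambda_1$ then for every $\epsilon>0$ there is a center $\nu$ with $D(B_L(\nu,2\delta)||P_0)\ge\eta$ and $d_L(\mu,\nu)\le\delta+\epsilon$; choosing $\epsilon<\delta$ puts $\mu$ itself inside $B_L(\nu,2\delta)$, whence $D(\mu||P_0)\ge\eta$ directly, with no appeal to closedness of any KLD superlevel set. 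You should replace your closure argument with this one.
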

	Theorem~\ref{thm:largedevi} applies to both discrete and $\mathcal{R}$-valued random variables. However, for the finite alphabet case, the corresponding detector as in (\ref{eq:largedevidetector}) yields weaker results than Hoeffding's detector \cite{Asymptotically}, a price paid for its generality. 
	
	With continuous alphabet, one has to be content with ``$\delta-$optimal" rather than ``optimal" for universal hypothesis testing if there is no restriction on the detector $\Omega$ while using the general Sanov's Theorem. For a test $\Omega$ defined by the partition of the probability space ($\Omega_1$, $\Omega_2$), either $\Omega_1$ or $\Omega_2$, may consist of only empirical distributions - it was established that the optimal test can depend on the observations only through the empirical distributions \cite{largedevi}. Suppose $\Omega_1$ consists of only empirical distributions. Then $int\Omega_1$ is empty and $cl\Omega_2$ equals to $\mathcal{P}$. It is also possible that the interior set and closure are too abstract or complicated to describe. In these cases, one can not take advantage of the general Sanov's Theorem to analyze the error exponents. That is why in Theorem \ref{thm:largedevi}, for an arbitrary test $\Omega$, we need to first perform $\delta-$smooth operation on it before comparing its error exponents to those of the test $\Lambda$. As such, without any restriction on the detector $\Omega$, one has to settle with ``$\delta-$optimality" instead of ``optimality" for the continuous case.
	
	Detector (\ref{eq:largedevidetector}) has a complicated form. Given the empirical distribution $\hat{\mu}_n$, it is difficult to determine whether $\hat{\mu}_n\in\Lambda_1$ or $\Lambda_2$ due to the following two reasons.
	
	\begin{itemize}
		\item
		Computing $D(B_L(\hat{\mu}_n,2\delta)||P_0)$ is an infinite dimension optimization problem. This is illustrated in Fig.~\ref{fig:BtoP} where one needs to find a continuous $\mu^*$ inside the shaded region such that $$D(\mu^*||P_0)=\inf_{\mu\in B_L(\hat{\mu}_n,2\delta)}D(\mu||P_0).$$
		\item
Suppose one can indeed evaluate $D(B_L(\hat{\mu}_n,2\delta)||P_0)$. If $D(B_L(\hat{\mu}_n,2\delta)||P_0)\ge\eta$ then $\hat{\mu}_n\in\Lambda_1$. However, if $D(B_L(\hat{\mu}_n,2\delta)||P_0)<\eta$, one needs to further check if $\hat{\mu}_n$ belongs to the $\delta$-smooth set of $\{\mu: D(B_L(\mu,2\delta)||P_0)\ge\eta\}$.
	\end{itemize}
	
	\begin{figure}[thb!]
		\vspace{-5pt}
		\begin{center}
			\includegraphics[width=3.6in]{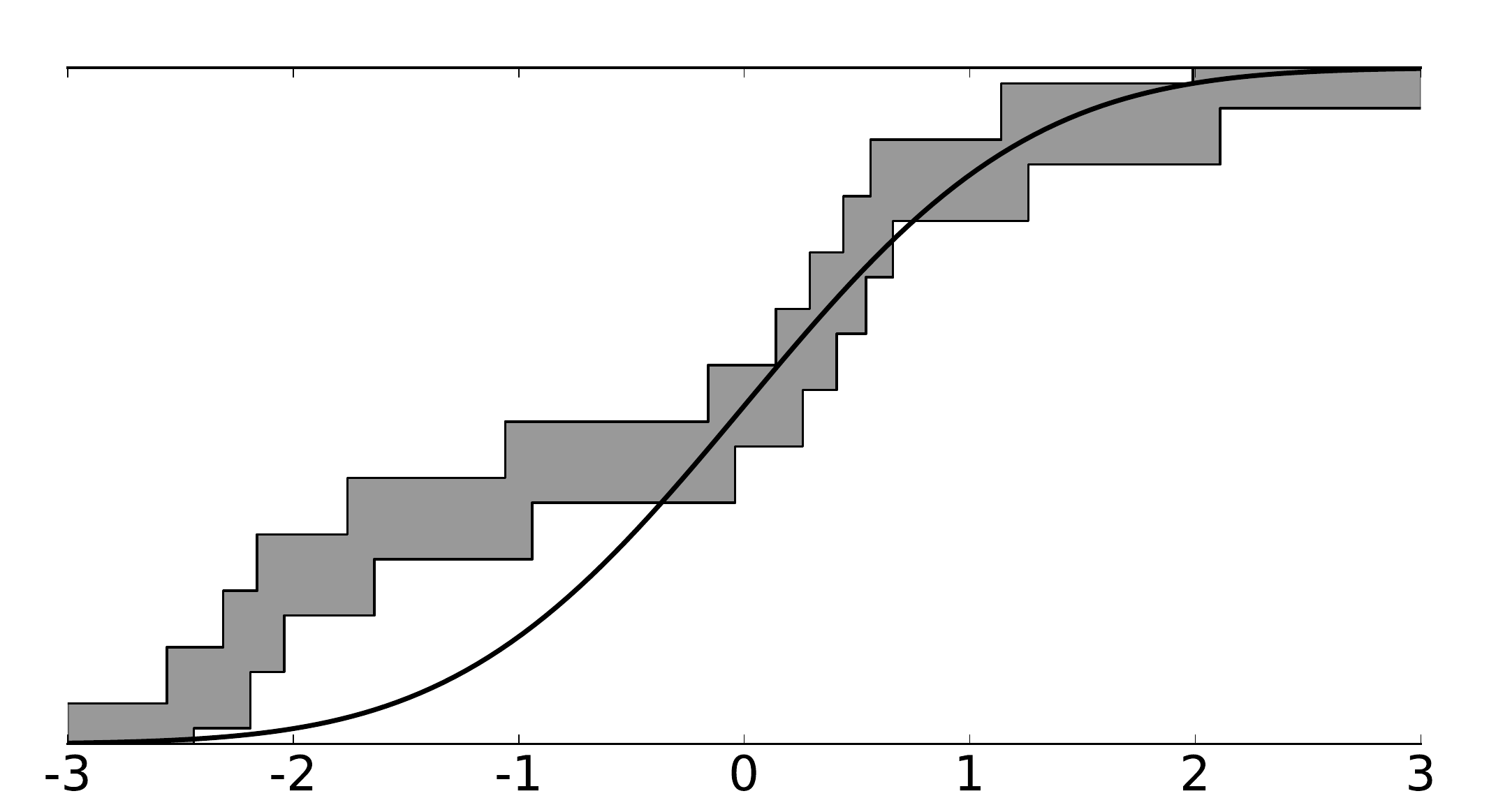}
			\caption[The shaded region is $B_L(\mu_n, 2\delta)$ and the solid line is $P_0$.]{The shaded region is $B_L(\hat{\mu}_n, 2\delta)$ and the solid line is $P_0$.}
			\label{fig:BtoP}
		\end{center}
	\end{figure}
	
	One of the difficulties to directly generalize the discrete case to the continuous case, as mentioned in \cite{largedevi}, is that the superlevel set $\{\mu\in\mathcal{P}: D(\mu||P_0)\geq\eta\}$ is not closed in $\mathcal{P}$, which we have discussed in detail in Section.\ref{subsec:continuityA}. In the next section, rather than ``$\delta-$smoothing'' the detector as one does in (\ref{eq:largedevidetector}), we generalize the hypothesis $\mathcal{H}_0$ from $P_0$ to $B_L(P_0,\epsilon_0)$. Then, we show that under the minimax criterion, the empirical likelihood ratio test is optimal.

\subsection{Robust Universal Hypothesis Testing}

\label{subsect:robustuniversal}

Let $\mathcal{P}_0:=B_L(P_0,\epsilon_0)$. The robust version of the universal hypothesis testing amounts to the testing of the following two hypotheses.
	\begin{eqnarray}
	\label{eq:robustUHT}
	\mathcal{H}_0: P\in\mathcal{P}_0, \quad \mathcal{H}_1: P=P'.
	\end{eqnarray}
	Here $P_0$ is assumed to be a known continuous distribution and $\epsilon_0>0$. $P'\notin\mathcal{P}_0$ and is unknown.
	
	Compared to the universal hypothesis test (\ref{eq:singleUHT}), the robust version (\ref{eq:robustUHT}) replaces a single $P_0$ with $B_L(P_0, \epsilon_0)$ for the null hypothesis. With this robust setting, the asymptotic NP criterion (\ref{eq:asmset}) is replaced by the following minimax asymptotic NP criterion: 
	\begin{eqnarray}
	\label{eq:robustasmset}
	\max_{\Omega}I^{P'}(\Omega) \ \ \text{\ s.t.\ } \ 
	J^{\mathcal{P}_0}(\Omega)\ge\eta,
	\end{eqnarray}
	where
	\begin{eqnarray} J^{\mathcal{P}_0}(\Omega):=\inf_{P\in\mathcal{P}_0}J^{P}(\Omega).
	\end{eqnarray}
Thus type II error exponent is maximized subject to a constraint on the worst type I error exponent. 	The reason that the L\'evy metric is used to define $\mathcal{P}_0$ is that the L\'evy metric is the weakest hence the most general one \cite{TV}. In another word, $B_L(P_0,\epsilon_0)$ contains all distributions that are close enough to $P_0$ as measured using any other metrics. An additional advantage is that the resulting optimal detector is rather intuitive and straightforward to implement. Theorem \ref{thm:robustuniversal} below describes the optimal solution to the robust universal hypothesis testing problem.
	
	\begin{thm}
		\label{thm:robustuniversal}
		For the robust universal hypothesis testing problem, the detector $\Lambda=\{\Lambda_0,\Lambda_1\}$ defined by, for some $\eta>0$,
		$$\Lambda_1(n)=\Lambda_1:=\{\mu: D(\mu||B_L(P_0,\epsilon_0))>\eta\}, \ \ \Lambda_0(n)=\mathcal{P}\setminus\Lambda_1,$$
satisfies the following properties: 
		\begin{enumerate}
			\item
			$J^{\mathcal{P}_0}(\Lambda)=\eta$.
			\item
			$I^{P'}(\Lambda)= D(\Lambda_0||P')$.
			\item
			For any detector $\Omega$ with $\Omega_1(n)=\Omega_1$ with $\Omega_1$ open, if
			\begin{eqnarray}
			\label{eq:contra1}
			J^{\mathcal{P}_0}(\Omega)>\eta,
			\end{eqnarray}
			then for any $P'\notin B_L(P_0,\epsilon_0)$,
			\begin{eqnarray}
			I^{P'}(\Omega)\le I^{P'}(\Lambda).
			\end{eqnarray}
		\end{enumerate}
	\end{thm}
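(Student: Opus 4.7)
The plan is to apply the general Sanov's Theorem (Theorem~\ref{thm:sanov}) to the sets $\Lambda_1$ and $\Lambda_0$, leveraging the continuity of $D(\cdot\,||\,B_L(P_0,\epsilon_0))$ established in Theorem~\ref{thm:continuous}. Continuity in the weak topology immediately makes $\Lambda_1$ open and $\Lambda_0$ closed, so the Sanov interior/closure bounds reduce, up to the common boundary $\{D(\cdot\,||\,\mathcal{P}_0)=\eta\}$, to the sets themselves. Combined with the trivial monotonicity $D(\mu\,||\,P)\ge D(\mu\,||\,\mathcal{P}_0)$ for every $P\in\mathcal{P}_0$ and a swap of infima, this drives the entire argument.

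For part~(1), Sanov's lower bound gives, for every $P\in\mathcal{P}_0$,
\begin{equation*}
J^{P}(\Lambda)\ge \inf_{\mu\in\mathrm{cl}\,\Lambda_1} D(\mu\,||\,P)\ge \inf_{\mu\in\mathrm{cl}\,\Lambda_1} D(\mu\,||\,\mathcal{P}_0)\ge \eta,
\end{equation*}
where the final step uses $\mathrm{cl}\,\Lambda_1\subseteq\{\mu:D(\mu\,||\,\mathcal{P}_0)\ge\eta\}$ by continuity of the robust KLD; hence $J^{\mathcal{P}_0}(\Lambda)\ge\eta$. Matching this, Sanov's upper bound on the open set $\Lambda_1$ followed by a swap of infima yields
\begin{equation*}
J^{\mathcal{P}_0}(\Lambda)\le\inf_{\mu\in\Lambda_1}\inf_{P\in\mathcal{P}_0}D(\mu\,||\,P)=\inf_{\mu\in\Lambda_1}D(\mu\,||\,\mathcal{P}_0)=\eta,
\end{equation*}
the final equality holding because $\Lambda_1$ is exactly the strict super-level set $\{D(\cdot\,||\,\mathcal{P}_0)>\eta\}$ and its elements can be chosen to approach $\eta$ from above.

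Part~(2) follows from the same Sanov recipe applied under $P'$ to the closed set $\Lambda_0$: the lower bound equals $D(\Lambda_0\,||\,P')$ directly, and to close the gap with the Sanov upper bound $\inf_{\mu\in\mathrm{int}\,\Lambda_0}D(\mu\,||\,P')$ I would approximate any (near-)minimizer $\mu^*\in\Lambda_0$ lying on the boundary $\{D(\cdot\,||\,\mathcal{P}_0)=\eta\}$ by convex combinations $\mu_t=(1-t)\mu^*+tQ$ with $Q\in\mathcal{P}_0$; joint convexity of the robust KLD forces $\mu_t\in\mathrm{int}\,\Lambda_0$, while joint convexity of $D(\cdot\,||\,P')$ and monotone convergence give $D(\mu_t\,||\,P')\to D(\mu^*\,||\,P')$ as $t\downarrow0$. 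For part~(3), suppose $\Omega_1$ is open with $J^{\mathcal{P}_0}(\Omega)>\eta$. Sanov's upper bound on the open set $\Omega_1$ and a swap of infima give
\begin{equation*}
\eta<J^{\mathcal{P}_0}(\Omega)\le\inf_{P\in\mathcal{P}_0}\inf_{\mu\in\Omega_1}D(\mu\,||\,P)=\inf_{\mu\in\Omega_1}D(\mu\,||\,\mathcal{P}_0),
\end{equation*}
so every $\mu\in\Omega_1$ satisfies $D(\mu\,||\,\mathcal{P}_0)>\eta$, i.e., $\Omega_1\subseteq\Lambda_1$, equivalently $\Omega_0\supseteq\Lambda_0$. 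Monotonicity of probability then yields $P'^n(\hat{\mu}_n\in\Omega_0)\ge P'^n(\hat{\mu}_n\in\Lambda_0)$, whence $I^{P'}(\Omega)\le I^{P'}(\Lambda)$.

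The main obstacle I anticipate is the boundary approximation in part~(2), because $D(\cdot\,||\,P')$ is only lower semicontinuous in the weak topology while the Sanov upper bound reaches only $\mathrm{int}\,\Lambda_0$. The convex-combination perturbation outlined above should resolve this, but requires verifying that $D(\mu_t\,||\,P')$ remains finite along the path and converges to $D(\mu^*\,||\,P')$. Parts~(1) and~(3) are then essentially bookkeeping: one swap of infima plus Sanov, with the crucial enabling fact being the openness (resp.\ closedness) of $\Lambda_1$ (resp.\ $\Lambda_0$) purchased by Theorem~\ref{thm:continuous}; absent the continuity of the robust KLD, this clean two-line argument for the optimality of $\Lambda$ would not go through, which is precisely why the continuous-alphabet case resisted a Hoeffding-style treatment previously.
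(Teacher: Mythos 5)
Your proposal matches the paper's proof essentially step for step: Sanov's theorem combined with the continuity of the robust KLD (which makes $\Lambda_1$ open and $\Lambda_0$ closed) handles parts (1) and (3) exactly as in the paper, and part (2) is closed by the same convex-combination perturbation $(1-t)\mu^*+tQ$ with $Q\in\mathcal{P}_0$ satisfying $D(Q||P')<\infty$. The only cosmetic difference is that the paper justifies $D(\mu_t||P')\to D(\mu^*||P')$ via the convexity bound $D(\mu_t||P')\le(1-t)D(\mu^*||P')+tD(Q||P')$ rather than ``monotone convergence,'' which is precisely the verification you flagged as the remaining obstacle.
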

	
Theorem~\ref{thm:robustuniversal} states that %, denoting $\mathcal{P}_0=B_L(P_0,\epsilon_0)$,  
the detector
		\begin{eqnarray}
		\label{eq:universal_detector}
		D(\hat{\mu}_n||\mathcal{P}_0)\underset{{H_0}}{\overset{H_{1}}{\gtrless}}\eta,
		\end{eqnarray} 
is optimal in type II error decay rate among all detectors $\Omega=\{\Omega_0(n),\Omega_1(n)\}$ that have the same worst case type I error decay rate as $\Lambda$. In particular, the optimal type II error decay rate is 
precisely $ D(\Lambda_0||P')$ when $P'$ is the true distribution under $H_1$.
	\begin{proof}
	The three parts of the Theorem~\ref{thm:robustuniversal} are proved below.

1) From the general Sanov's theorem, we have  
	\begin{eqnarray}
	\inf_{P\in\mathcal{P}_0}J^{P}(\Lambda)&=&\inf_{P\in\mathcal{P}_0}\liminf_{n\rightarrow\infty}-\frac{1}{n}\log P^n(\{x^n: \hat{\mu}_n\in\Lambda_1\}) \nonumber \\ 
	&\ge&\inf_{P\in\mathcal{P}_0}\inf_{\mu\in cl\Lambda_1}D(\mu||P) \nonumber \\
	&=&\inf_{\mu\in cl\Lambda_1}D(\mu||\mathcal{P}_0) \nonumber\\
	&=&\eta, \nonumber
	\end{eqnarray}
	the last equality holds since $D(\mu||\mathcal{P}_0)$ is continuous in $\mu$ thus $cl\Lambda_1\subseteq\{\mu: D(\mu||\mathcal{P}_0)\ge\eta\}$. On the other hand,
	\begin{eqnarray}
	\inf_{P\in\mathcal{P}_0}J^{P}(\Lambda)&\le&\inf_{P\in\mathcal{P}_0}\limsup_{n\rightarrow\infty}-\frac{1}{n}\log P^n(\{x^n: \hat{\mu}_n\in\Lambda_1\}) \nonumber\\ 
	&\le&\inf_{P\in\mathcal{P}_0}\inf_{\mu\in int\Lambda_1}D(\mu||P) \nonumber\\
	&=&\eta. \label{eq:092304}
	\end{eqnarray}
	The last equality holds since $int\Lambda_1=\Lambda_1$.

2)	Again from the general Sanov's theorem, we have
	\begin{eqnarray}
	I^{P'}(\Lambda)&=&\liminf_{n\rightarrow\infty}-\frac{1}{n}\log P^{'n}(\{x^n: \hat{\mu}_n\in\Lambda^{}_0\}) \nonumber\\
	&\ge&\inf_{\mu\in cl\Lambda^{}_0}D(\mu||P') \nonumber\\
	&=& D(\Lambda_0||P'). \nonumber
	\end{eqnarray}
	The last equality holds since $cl\Lambda_0=\Lambda_0$.
	On the other hand, $\{\mu: D(\mu||\mathcal{P}_0)<\eta\}\subseteq int\Lambda_0$, thus,
	\begin{eqnarray}
	I^{P'}(\Lambda)&\le&\limsup_{n\rightarrow\infty}-\frac{1}{n}\log P^{'n}(\{x^n: \hat{\mu}_n\in\Lambda^{}_0\}) \nonumber\\
	&\le&\inf_{\mu\in int\Lambda^{}_0}D(\mu||P') \nonumber\\
	&\le&\inf_{\mu\in \{\mu: D(\mu||\mathcal{P}_1)<\eta\}} D(\mu||P')\nonumber \\
	&\le& D(\Lambda^{}_0||P'). \label{eq:p2closure}
	\end{eqnarray}	
	Inequality (\ref{eq:p2closure}) holds because of the following. There exists a distribution $P\in\mathcal{P}_0$ such that $D(P||P')<\infty$. For any $P_c\in\Lambda^{}_0$ and $0<\lambda<1$, we have $(1-\lambda)P_c+\lambda P\in\{\mu: D(\mu||\mathcal{P}_0)<\eta\}$ since 
	\begin{eqnarray}
%	&&D((1-\lambda)P_c+\lambda P||\mathcal{P}_0) \nonumber \\
D((1-\lambda)P_c+\lambda P||\mathcal{P}_0) 
	&\le&(1-\lambda)D(P_c||\mathcal{P}_0)+\lambda D(P||\mathcal{P}_0)) \label{eq:092301}\\
	&<&(1-\lambda)\eta+0 \label{eq:092302}\\
	&<&\eta, \nonumber
	\end{eqnarray}
	where (\ref{eq:092301}) comes from the fact that $D(\mu||\mathcal{P}_0)$ is convex in $\mu$ while inequality (\ref{eq:092302}) is due to the fact $P\in\mathcal{P}_0$. Thus, 
	\begin{eqnarray}
%	&&\inf_{\mu\in \{\mu: D(\mu||\mathcal{P}_1)<\eta\}} D(\mu||P') \nonumber \\ 
\inf_{\mu\in \{\mu: D(\mu||\mathcal{P}_1)<\eta\}} D(\mu||P') &\le&\lim_{\lambda\to0^+}D((1-\lambda)P_c+\lambda P||P') \nonumber\\
	&\le&\lim_{\lambda\to0^+}(1-\lambda)D(P_c||P')+\lambda D(P||P') \nonumber\\
	&\le&D(P_c||P'),\nonumber
	\end{eqnarray}
	the last inequality holds since $D(P||P')<\infty$. The above inequalities hold for any $P_c\in\Lambda_0$, thus we have 
	\begin{eqnarray}
	\inf_{\mu\in \{\mu: D(\mu||\mathcal{P}_0)<\eta\}} D(\mu||P') \le D(\Lambda^{}_1||P'). \nonumber
	\end{eqnarray}

3)	We have
	\begin{eqnarray}
%	&&\inf_{P\in\mathcal{P}_0}D(\Omega_1||P) \nonumber\\
 \inf_{P\in\mathcal{P}_0}D(\Omega_1||P) 
 &=&\inf_{P\in\mathcal{P}_0}D(int\Omega_1||P) \nonumber\\
	&\ge&\inf_{P\in\mathcal{P}_0}\liminf_{n\rightarrow\infty}-\frac{1}{n}\log P(\{x^n:  \hat{\mu}_n\in\Omega_1^{}\}) \nonumber\\
	&>&\eta.\nonumber
	\end{eqnarray}
	Therefore, $\Omega_1\subseteq\Lambda_1$, or equivalently, $\Lambda_0\subseteq\Omega_0$. Next, 
	\begin{eqnarray}
	I^{P'}(\Omega)&=&\liminf_{n\rightarrow\infty}-\frac{1}{n}\log P^{'n}(\{x^n: \hat{\mu}_n\in\Omega_0^{}\}) \nonumber\\
	&\le&\liminf_{n\rightarrow\infty}-\frac{1}{n}\log P^{'n}(\{x^n: \hat{\mu}_n\in \Lambda_0\})  \nonumber\\
	&=& I^{P'}(\Lambda).\nonumber
	\end{eqnarray}

	\end{proof}
	
	Compared to detector (\ref{eq:largedevidetector}) in Theorem \ref{thm:largedevi}, detector (\ref{eq:universal_detector}) has three main differences.
	\begin{itemize}
		\item
Computing $D(\hat{\mu}_n||B_L(P_0,\delta_0))$ is a finite dimension optimization problem, which is in essence finding a step function inside the shaded area that achieves the minimum KLD to $\hat{\mu}_n$ (see Fig.~\ref{fig:PtoB}). This can be shown to be a convex optimization problem with linear constraints thus can be readily solved via standard convex programs \cite{Yang-thesis}.
		\item
From Theorem~\ref{thm:largedevi}, the detector developed in \cite{largedevi} can not be compared directly to an arbitrary detector $\Omega$; instead,  $\Omega^{\delta}$ is used in establishing the optimality of the proposed detector. This ensures that $\Omega_1^{\delta}$ is open and $\Omega_0^{\delta}$ is closed yet this leads to a weaker sense of optimality, i.e., $\delta$-optimality.
		
		In Theorem \ref{thm:robustuniversal}, by restricting $\Omega_1$ to be independent of $n$ and assuming $\Omega_1$ is open, asymptotic NP optimality is established which is stronger than $\delta-$optimality.
		\item
		Theorem \ref{thm:largedevi} only provides the lower bound for error exponents, while Theorem \ref{thm:robustuniversal} characterizes the exact values of error exponents. Furthermore, while the error exponents $I$ and $J$ are defined using limit infimum, from the proof it can be seen that $I$ and $J$ remain unchanged if one uses limit to define error exponents. Therefore, Theorem \ref{thm:robustuniversal} gives an exact characterization of error exponents.
	\end{itemize}
	\begin{figure}[thb!]
		\vspace{-5pt}
		\begin{center}
			\includegraphics[width=3.6in]{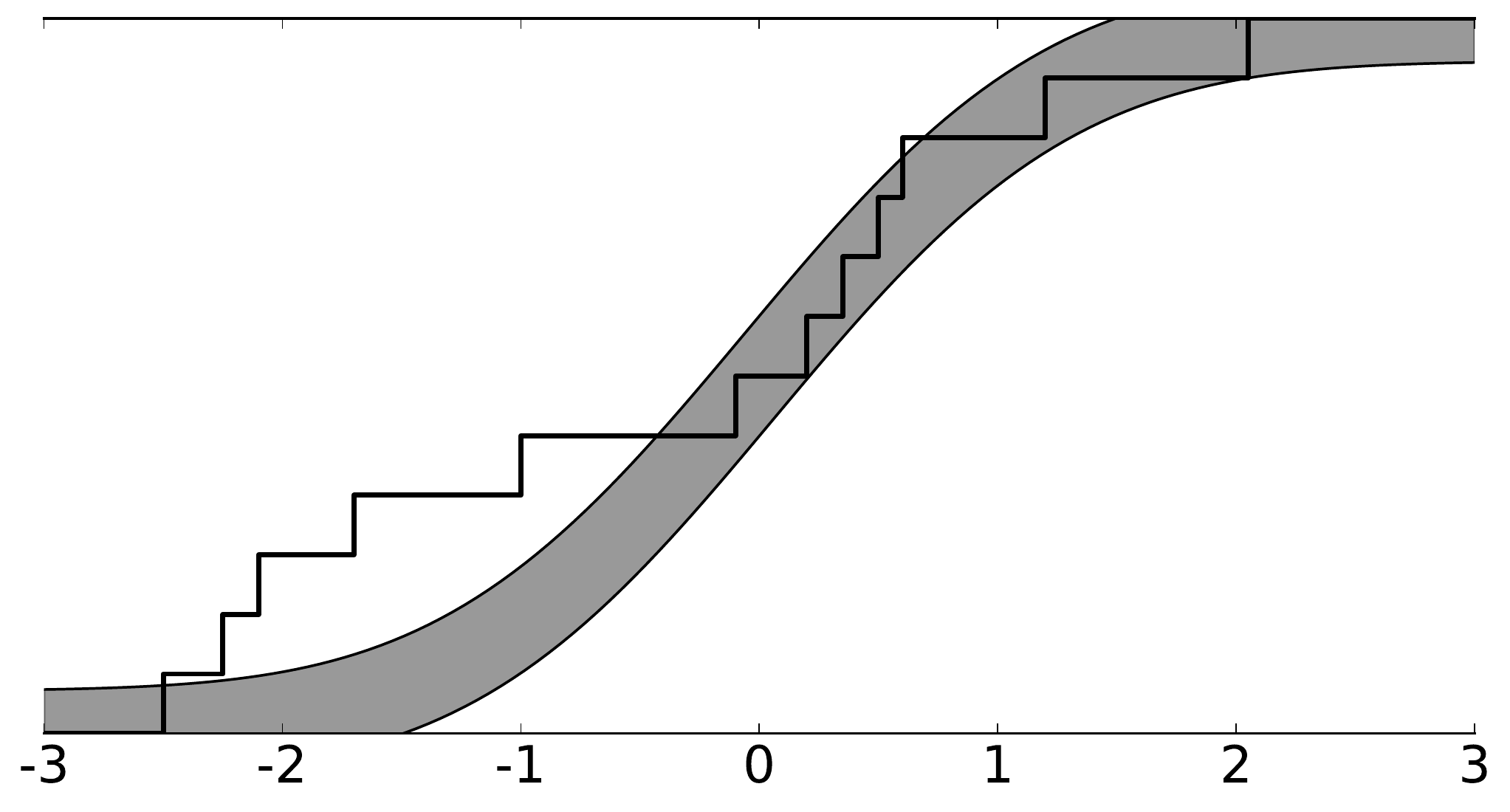}
			\caption[L\'evy ball of the standard normal distribution and the step function]{The shaded region is a L\'evy ball of the normal distribution and the step function is an example of $\hat{\mu}_n$.}
			\label{fig:PtoB}
		\end{center}
	\end{figure}
	
	Summarizing, by considering the universal hypothesis testing in the robust setting, the generalized empirical likelihood ratio test becomes optimal, and the construction of the detector and the proof of optimality are much simplified. %One can also argue that it is the robust universal hypothesis testing is perhaps also practically more meaningful than the original one where the nominal state is a single distribution.

	\section{Conclusion}
The Kullback-Leibler divergence (KLD) between a pair of distributions is only lower semicontinuous in the distribution functions for continuous observations. This is in contrast to the case with finite alphabet in which KLD is known to be continuous. As such, while simple and optimal solution may exist for some hypothesis testing problems involving finite alphabet observations, these results often do not generalize to the continuous case as the continuity of KLD plays a crucial role in obtaining the optimal test. 

The problem considered in the present paper is the universal hypothesis testing where the null hypothesis is specified by a nominal distribution whereas the alternative hypothesis is specified by a different but otherwise unknown distribution. With finite alphabet, Hoeffding's test, which is in essence a threshold test of the empirical KLD, is known to be asymptotically Neyman-Pearson (NP) optimal for the finite-alphabet case. For continuous observations, however, existing results have to resort to a weaker notion of optimality with a much more complicated detector compared with Hoeffding's detector.

This paper introduced the notion of the robust KLD, defined as the KLD between a distribution to the L\'evy ball of another distribution. In contrast to the classical KLD, this robust KLD was shown to be continuous in the first distribution function. Subsequently, by formulating a robust version of the universal hypothesis testing where the null hypothesis is specified by a L\'evy ball centered around the nominal distribution, it was established that the generalized empirical likelihood ratio test is optimal under the asymptotic minimax NP criterion whose error exponents were characterized precisely. Additionally, the test itself is also much more intuitive and easier to evaluate compared with existing approaches in the literature. 
	
	\appendices
	\section{Proof of Theorem \ref{thm:continuous}} 
	\label{app:continuous}
	The set of all partitions $\mathcal{A}=(A_1, \cdots, A_{|\mathcal{A}|})$ of $\mathcal{R}$ into a finite number of sets $A_i$ is denoted by $\Pi$. For a given partition $\mathcal{A}$, denote by $P^{\mathcal{A}}$ the quantized (discrete) probability over $\mathcal{A}$ of a probability distribution $P\in\mathcal{P}$. Thus $P^{\mathcal{A}}$ is a $|\mathcal{A}|$ dimensional vector $(P(A_1),P(A_2),\cdots, P(A_{|\mathcal{A}|}))\in\mathcal{R}^{|\mathcal{A}|}$. We introduce a new definition of KLD utilizing partitions, which is equivalent to the classical definition using the Radon-Nikodym derivative in (\ref{def:kl2}).
	\begin{Def}[The KLD \cite{kld_partition}]
		The KLD between $P\in\mathcal{P}$ and $Q\in\mathcal{P}$ is defined as, 
		\begin{eqnarray}
		\label{eq:kl}
		D(P||Q)=\sup_{\mathcal{A}\in\Pi}D^{}(P^{\mathcal{A}}|| Q^{\mathcal{A}}),
		\end{eqnarray}
		where 
		\begin{eqnarray}
		D^{}(P^{\mathcal{A}}||Q^{\mathcal{A}})=\sum_{i=1}^{|\mathcal{A}|}P(A_i)\log\frac{P(A_i)}{Q(A_i)}. \nonumber
		\end{eqnarray}
	\end{Def}
	
	The following lemma generalizes (\ref{eq:kl}) from the classical KLD to the robust KLD. For set $\Gamma\subseteq\mathcal{P}$, we define $\Gamma^{\mathcal{A}}:=\{P^{\mathcal{A}}: P\in\Gamma\}.$
	\begin{lem}
		\label{lem:pequalm}
		For $\mu, P_0\in\mathcal{P}$ and $\delta_0>0$, $D(\mu||B_L(P_0,\delta_0)) = \sup_{\mathcal{A}\in\Pi}D(\mu^{\mathcal{A}}||B_L^{\mathcal{A}}(P_0,\delta_0)).$
	\end{lem}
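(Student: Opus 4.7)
My plan is to prove the two inequalities separately, interpreting the claim as a max-min / min-max swap: with $D^\star := D(\mu||B_L(P_0,\delta_0)) = \inf_{P \in B_L(P_0,\delta_0)} D(\mu||P)$, the goal is $D^\star = \sup_{\mathcal{A} \in \Pi}\inf_{P \in B_L(P_0,\delta_0)} D(\mu^{\mathcal{A}}||P^{\mathcal{A}})$. The easy direction $D^\star \ge \sup_{\mathcal{A}} \inf_P D(\mu^{\mathcal{A}}||P^{\mathcal{A}})$ is just the max-min inequality combined with the data processing inequality applied to each partition: for any $\mathcal{A}$ and any $P \in B_L(P_0,\delta_0)$, $D(\mu^{\mathcal{A}}||P^{\mathcal{A}}) \le D(\mu||P)$, so $\inf_P D(\mu^{\mathcal{A}}||P^{\mathcal{A}}) \le \inf_P D(\mu||P) = D^\star$, and taking $\sup_{\mathcal{A}}$ preserves the inequality.

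For the reverse, I would first show that $B_L(P_0,\delta_0)$ is weakly sequentially compact. It is weakly closed by the triangle inequality for $d_L$ together with the equivalence between weak convergence and convergence in $d_L$, and it is tight because every $P$ in the ball obeys $P_0(x-\delta_0)-\delta_0 \le P(x) \le P_0(x+\delta_0)+\delta_0$, so tail masses are uniformly controlled; Prokhorov's theorem then applies. Next, I would fix a countable sequence of partitions $\mathcal{A}_n \in \Pi$ that is cofinal under refinement (for instance dyadic partitions of $[-n,n]$ padded with outer tail intervals), so $D(\mu^{\mathcal{A}_n}||P^{\mathcal{A}_n}) \nearrow D(\mu||P)$ for every $P$. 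For each $n$, since the finite-dimensional KL is lower semicontinuous and $P \mapsto P^{\mathcal{A}_n}$ is weakly continuous at distributions whose continuity points contain the boundaries of $\mathcal{A}_n$, the infimum $h_n := \inf_{P \in B_L(P_0,\delta_0)} D(\mu^{\mathcal{A}_n}||P^{\mathcal{A}_n})$ is attained at some $P_n^\star \in B_L(P_0,\delta_0)$ by weak compactness. This also furnishes the attainment claim advertised in the sketch preceding the Lemma.

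Extract a weakly convergent subsequence $P_n^\star \wto P^\star$, with $P^\star \in B_L(P_0,\delta_0)$ since the ball is closed. For any partition $\mathcal{B}$ whose boundary points are continuity points of both $\mu$ and $P^\star$, and any $n$ with $\mathcal{A}_n$ refining $\mathcal{B}$, the data processing inequality gives $D(\mu^{\mathcal{B}}||(P_n^\star)^{\mathcal{B}}) \le D(\mu^{\mathcal{A}_n}||(P_n^\star)^{\mathcal{A}_n}) = h_n$. The boundary condition ensures $(P_n^\star)^{\mathcal{B}} \to (P^\star)^{\mathcal{B}}$ coordinatewise, and by lower semicontinuity of the finite-dimensional KL, $D(\mu^{\mathcal{B}}||(P^\star)^{\mathcal{B}}) \le \liminf_n D(\mu^{\mathcal{B}}||(P_n^\star)^{\mathcal{B}}) \le \lim_n h_n = \sup_{\mathcal{A}}\inf_P D(\mu^{\mathcal{A}}||P^{\mathcal{A}})$. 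Taking the supremum over such $\mathcal{B}$ (which is cofinal in $\Pi$ under refinement since the joint discontinuity set of $\mu$ and $P^\star$ is at most countable) recovers $D(\mu||P^\star)$, and since $P^\star \in B_L(P_0,\delta_0)$ this upper-bounds $D^\star$, completing the reverse inequality.

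The main obstacle I anticipate is the last cofinality claim --- justifying that one can restrict the $\sup$ to partitions whose boundary points avoid the countable exceptional sets of $\mu$ and $P^\star$ without losing anything. This calls for a careful refinement argument (given any partition, nudge each boundary to a nearby joint continuity point and observe that the refined partition satisfies the data-processing monotonicity). A secondary care point is the lower semicontinuity of the simplex KL when some coordinates of $(P^\star)^{\mathcal{B}}$ vanish, but this only makes the bound sharper. All other ingredients --- weak compactness of the L\'evy ball, lower semicontinuity of KL, and attainment of the partition-wise infima --- are standard once weak compactness is in hand.
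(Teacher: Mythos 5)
Your first direction and the overall architecture of the reverse direction (attain the partition-wise infima, pass to a limit of the minimizers, use data processing along refining partitions plus lower semicontinuity of the finite-dimensional KL) are sound and in fact close in spirit to what the paper does via Lemma 2.4 of \cite{GOR}. However, there is a genuine gap at the step you treat as routine: the claim that $B_L(P_0,\delta_0)$ is tight, hence weakly sequentially compact in $\mathcal{P}$, is false. The two-sided bound $P_0(x-\delta_0)-\delta_0\le P(x)\le P_0(x+\delta_0)+\delta_0$ controls the tails only up to an additive $\delta_0$, so an amount of mass up to $\delta_0$ can escape to each of $\pm\infty$ while remaining in the ball. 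Concretely, $P_n=(1-\delta_0)P_0+\delta_0\,\delta_{x_n}$ with $x_n\to\infty$ satisfies $d_L(P_n,P_0)\le\delta_0$ for every $n$, yet $P_n$ converges vaguely to the defective limit $(1-\delta_0)P_0$, which is not a probability measure. Consequently Prokhorov's theorem does not apply, your extracted limit $P^\star$ of the minimizers $P_n^\star$ need not lie in $B_L(P_0,\delta_0)\subseteq\mathcal{P}$, and the final step ``since $P^\star\in B_L(P_0,\delta_0)$ this upper-bounds $D^\star$'' is exactly the point that can fail.

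The paper's proof is organized around repairing precisely this defect. It enlarges $\mathcal{P}$ to the space $\mathcal{M}$ of finitely additive normalized set functions (equivalently, right-continuous non-decreasing $M$ with $M(-\infty)\ge 0$ and $M(\infty)\le 1$), which \emph{is} compact under weak convergence, so the infimum over the enlarged ball $\bar B_L(P_0,\delta_0)$ is attained by lower semicontinuity. The non-trivial ingredient your proposal is missing is then the identity (\ref{eq:exchange1}): if the minimizer $P_\mu$ over $\bar B_L(P_0,\delta_0)$ is defective, one must explicitly reallocate the missing mass (the paper's two-case construction of $P'_\mu$, which rescales $P_\mu$ below a suitably chosen point $s$) and verify both that the repaired distribution stays inside the L\'evy ball and that the divergence does not increase --- the log-sum computation showing $D(\mu||P'_\mu)\le D(\mu||P_\mu)$. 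Without this repair step (or some other argument ruling out escape of mass along the minimizing sequence), the reverse inequality is not established. The remaining ingredients of your argument --- the cofinal sequence of partitions, the countability of the joint discontinuity set, and the finite-dimensional lower semicontinuity --- are fine and correspond to the \cite{GOR}-style argument the paper invokes for (\ref{eq:exchange2}).
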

	\begin{proof}
		The probability space $\mathcal{P}$ defined on $(\mathcal{R},\mathcal{F})$, where $\mathcal{R}$ is the real line and $\mathcal{F}$ is the sigma-algebra that contains all the Borel sets of $\mathcal{R}$,  is not compact with respect to the weak convergence \cite{spaceproperty}. Let $\mathcal{M}$ denote the space of {\it finitely} additive and non-negative set functions on $(\mathcal{R},\mathcal{F})$ with $M(\mathcal{R})=1$ for $M\in\mathcal{M}$. Thus, we relax the countable additivity to finite additivity. As a consequence, $\mathcal{P}\subseteq\mathcal{M}$ and $\mathcal{M}$ is compact with respect to the weak convergence  \cite{simple}. Similarly, we define $M(t):=M((-\infty,t])$ and $M(t)$ is a right continuous non-decreasing function on $\mathcal{R}$. As with $P(t)$ and $P\in\mathcal{P}$, $M(t)$ and $M\in \mathcal{M}$ are equivalent since one is uniquely determined by the other.
		
		$\mathcal{P}$ is equivalent to the set of right continuous non-decreasing functions on $\mathcal{R}$ with $P(-\infty)=0$ and $P(\infty)=1$ if $P\in\mathcal{P}$ \cite{spaceproperty}, while $M$ is equivalent to the set of right continuous non-decreasing functions on $\mathcal{R}$ with $M(-\infty)\geq0$ and $M(\infty)\leq1$ if $M\in\mathcal{M}$. The L\'evy metric $d_L$ and the KLD extend unchanged to $F\in\mathcal{M}$ and $G\in\mathcal{M}$ \cite{spaceproperty, simple}. The following three steps constitute the proof of the lemma,
		\begin{eqnarray}
		D(\mu||B_L(P_0,\delta_0))&=& D(\mu||\bar{B}_L(P_0,\delta_0)),\ \ \ \   %Since $\{P\in\mathcal{M}: d_L(P,P_0)\le\delta_0\}$ is compact. 
		\label{eq:exchange1}\\
		D(\mu||\bar{B}_L(P_0,\delta_0)) &=& \sup_{\mathcal{A}\in\Pi}D(\mu^{\mathcal{A}}||\bar{B}_L^{\mathcal{A}}(P_0,\delta_0)), 
		\label{eq:exchange2} \ \ \ \  \\
		\sup_{\mathcal{A}\in\Pi}D(\mu^{\mathcal{A}}||\bar{B}_L^{\mathcal{A}}(P_0,\delta_0))&=&\sup_{\mathcal{A}\in\Pi}D(\mu^{\mathcal{A}}||B_L^{\mathcal{A}}(P_0,\delta_0)), \label{eq:exchange3} \ \ \ \ 
		\end{eqnarray}
		where $\bar{B}_L(P_0,\delta_0):=\{P\in\mathcal{M}: d_L(P,P_0)\le\delta_0\}$.
		We now prove (\ref{eq:exchange1})-(\ref{eq:exchange3}). 
%		\begin{itemize}
%			\item
We first prove (\ref{eq:exchange1}). Note that $\bar{B}_L(P_0,\delta_0)$ is closed with respect to the weak convergence, thus is compact since $\mathcal{M}$ is compact. Let 
			$$P_{\mu}:=\arg\inf_{\{P\in \bar{B}_L(P_0,\delta_0)\}}D(\mu||P),$$ 
			the existence of $P_{\mu}$ is guaranteed since $D(\mu||P)$ is lower semicontinuous and lower semicontinuous function attains its infimum on a compact set. Assume $P_{\mu}\in\mathcal{M}\setminus\mathcal{P}$, then there exists a $\delta>0$ such that $P_{\mu}(-\infty)\geq\delta$ or $P_{\mu}(+\infty)\leq1-\delta$. Without loss of generality, we assume $P_{\mu}(-\infty)=\delta$ and $P_{\mu}(+\infty)=1$; other cases can be proved in a similar manner. Let $s$ denote the minimum $t$ such that $P_0(t-\delta_0)\geq\delta_0$, we construct $P'_{\mu}$ as follows.
			\begin{itemize}
				\item
				If $P_{\mu}(s)=\delta$, let
				\[ P'_{\mu}(t) =
				\begin{cases}
				0       & \quad \text{if } t<s,\\
				P_{\mu}(t)  & \quad \text{if } t\ge s.\\
				\end{cases}
				\]
				Since $\inf_tP'_{\mu}(t)=0$ and $\sup_tP'_{\mu}(t)=1$, $P'_{\mu}\in\mathcal{P}$. In addition, it can be easily verified that $d_L(P'_{\mu},P_0)\le\delta_0$. Therefore, $P'_{\mu}\in B_L(P_0,\delta_0)$ and $D(\mu||P'_{\mu})=D(\mu||P_{\mu})$.
				\item
				If $P_{\mu}(s)>\delta$, let
				\[ P'_{\mu}(t) =
				\begin{cases}
				\frac{(P_{\mu}(t)-\delta)P_{\mu}(s)}{P_{\mu}(s)-\delta}      & \quad \text{if } t<s,\\
				P_{\mu}(t)  & \quad \text{if } t\ge s.\\
				\end{cases}
				\]
				$\inf_tP'_{\mu}(t)=0$ and $\sup_tP'_{\mu}(t)=1$ thus $P'_{\mu}\in\mathcal{P}$. For $t<s$, $$\frac{(P_{\mu}(t)-\delta)P_{\mu}(s)}{P_{\mu}(s)-\delta}\le P_{\mu}(t)\Leftrightarrow P_{\mu}(t)\le P_{\mu}(s),$$ 
				then $d_L(P'_{\mu},P_0)\le\delta_0$ because
				$$P_0(t-\delta_0)-\delta_0<0\le P'_{\mu}(t)\le P_{\mu}(t)\le P_0(t+\delta_0)+\delta_0.$$
				Therefore, we have $P'_{\mu}\in B_L(P_0,\delta_0)$. Also $P'_{\mu}$ achieves the infimum since,  %since infP'=0, supP'=1, then P'\in\mathcal{P}, otherwise we can also construc P'' s.t. P''\in\mathcal{P} and D(\mu||P')=D(\mu||P''). Claim, CDF with infCDF=0 and supCDF=1 then it can be a distribution. (we can prove all these as long as any left continuous monotone increasing function is a pdf.)
				\begin{eqnarray}
%				&&D(\mu||P'_{\mu}) \nonumber \\
 D(\mu||P'_{\mu}) 
 &=&\int_{-\infty}^{s-}d\mu(t)\log\frac{d\mu(t)}{dP'_{\mu}(t)}+\int_{s}^{\infty}d\mu(t)\log\frac{d\mu(t)}{dP'_{\mu}(t)} \nonumber \\
				&=&\mu(s-)\log\frac{P_{\mu}(s)-\delta}{P_{\mu}(s)}+\int_{-\infty}^{s-}d\mu(t)\log\frac{d\mu(t)}{dP_{\mu}(t)} \nonumber \\
				&&+\int_{s}^{\infty}d\mu(t)\log\frac{d\mu(t)}{dP_{\mu}(t)} \nonumber \\
				&=&\mu(s-)\log\frac{P_{\mu}(s)-\delta}{P_{\mu}(s)}+D(\mu||P_{\mu}) \nonumber \\
				&\le& D(\mu||P_{\mu}). \nonumber
				\end{eqnarray}
			\end{itemize}
			Therefore in either case, there exists $P'_{\mu}\in B_L(P_0,\delta_0)$ such that
			\begin{eqnarray}
			\label{eq:achieves_infimum}
			P'_{\mu}=\arg\inf_{\{P\in \bar{B}_L(P_0,\delta_0)\}}D(\mu||P). 
			\end{eqnarray}

To prove (\ref{eq:exchange2}), note that Lemma 2.4 in \cite{GOR} shows that $$D(B_L(P_0,\delta_0)||\mu)=\sup_{\mathcal{A}\in\Pi}D(B_L^{\mathcal{A}}(P_0,\delta_0)||\mu^{\mathcal{A}}).$$ 
Using a parallel proof, one can establish that $$D(\mu||\bar{B}_L(P_0,\delta_0))=\sup_{\mathcal{A}\in\Pi}D(\mu^{\mathcal{A}}||\bar{B}_L^{\mathcal{A}}(P_0,\delta_0)),$$ i.e., (\ref{eq:exchange2}) holds.
			
Finally, (\ref{eq:exchange3}) holds since for any $\mathcal{A}\in\Pi$, $\bar{B}_L^{\mathcal{A}}(P_0,\delta_0)=B_L^{\mathcal{A}}(P_0,\delta_0)$. %, therefore (\ref{eq:exchange3}) holds.
%		\end{itemize}	
	\end{proof}
	
	The robust KLD is continuous in the radius of the L\'evy ball under a mild assumption, as stated in the lemma below.
	\begin{lem}
		\label{lem:kld_left_conti}
		Given $\mu, P_0\in\mathcal{P}$ and $\delta_0>0$, if $P_0(t)$ is continuous in $t$, then $D(\mu||B_L(P_0,\delta_0))$ is continuous in $\delta_0$.
	\end{lem}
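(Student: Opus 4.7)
The plan is to exploit monotonicity: writing $f(\delta) := D(\mu \,\|\, B_L(P_0,\delta))$, the map $f$ is non-increasing in $\delta$ because $B_L(P_0,\delta) \subseteq B_L(P_0,\delta')$ whenever $\delta \le \delta'$. Since a non-increasing function is continuous iff it is both left- and right-continuous, I split the argument into these two parts. The case $f(\delta_0)=\infty$ is degenerate --- monotonicity forces $f\equiv\infty$ on $[0,\delta_0]$, and a compactness plus lower-semicontinuity argument (next paragraph) rules out a finite right-limit --- so I may henceforth take $f(\delta_0)<\infty$ and, by Lemma~\ref{lem:pequalm} (specifically the attainment statement around \eqref{eq:achieves_infimum}), fix a minimizer $P^{\ast}\in B_L(P_0,\delta_0)$ with $D(\mu\|P^{\ast})=f(\delta_0)$.

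For right-continuity, let $\delta_n\downarrow\delta_0$ and choose near-minimizers $P_n\in B_L(P_0,\delta_n)$ with $D(\mu\|P_n)\le f(\delta_n)+1/n$. Exactly as in the proof of Lemma~\ref{lem:pequalm}, the enlarged space $\mathcal{M}$ is compact under weak convergence, so a subsequence of $P_n$ converges weakly to some $P^{\infty}$; since $d_L(P_n,P_0)\le\delta_n\to\delta_0$ and $d_L$ is weakly continuous, $P^{\infty}\in\bar{B}_L(P_0,\delta_0)$. Lower semicontinuity of $D(\mu\|\cdot)$ together with \eqref{eq:exchange1} then gives
\[
f(\delta_0)=D(\mu\|\bar{B}_L(P_0,\delta_0))\le D(\mu\|P^{\infty})\le\liminf_{n\to\infty}D(\mu\|P_n)=\lim_{n\to\infty}f(\delta_n),
\]
and the reverse inequality is immediate from monotonicity.

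Left-continuity is the main obstacle, and is where the hypothesis that $P_0$ is continuous really enters. The strategy is to perturb the minimizer toward $P_0$. For $\alpha\in(0,1)$ set $Q_{\alpha}:=(1-\alpha)P^{\ast}+\alpha P_0$. Two estimates are needed. First, because $Q_{\alpha}\ge(1-\alpha)P^{\ast}$ as measures, the Radon--Nikodym derivative satisfies $d\mu/dQ_{\alpha}\le(1-\alpha)^{-1}\,d\mu/dP^{\ast}$, yielding
\[
D(\mu\|Q_{\alpha})\le D(\mu\|P^{\ast})-\log(1-\alpha),
\]
so $D(\mu\|Q_{\alpha})\to f(\delta_0)$ as $\alpha\to 0$. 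Second, I need to shrink the L\'evy distance: since $P_0$ is continuous on $\mathcal{R}$ with limits $0$ and $1$ at $\mp\infty$, it is uniformly continuous with modulus $\omega(\eta)\to 0$ as $\eta\to 0$. Writing out the L\'evy condition $Q_{\alpha}(x-\epsilon)-\epsilon\le P_0(x)\le Q_{\alpha}(x+\epsilon)+\epsilon$ and using $P^{\ast}(x-\epsilon)\le P_0(x-\epsilon+\delta_0)+\delta_0$ (from $d_L(P^{\ast},P_0)\le\delta_0$), a direct calculation shows that $d_L(Q_{\alpha},P_0)\le\epsilon$ is implied by
\[
\alpha\ge\frac{\delta_0-\epsilon+\omega(\delta_0-\epsilon)}{\delta_0+\omega(\delta_0-\epsilon)},
\]
and the right-hand side tends to $0$ as $\epsilon\uparrow\delta_0$.

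Putting these together, given any sequence $\delta_n\uparrow\delta_0$, I pick $\alpha_n\to 0$ satisfying the displayed inequality with $\epsilon=\delta_n$; then $Q_{\alpha_n}\in B_L(P_0,\delta_n)$, so $f(\delta_n)\le D(\mu\|Q_{\alpha_n})\le D(\mu\|P^{\ast})-\log(1-\alpha_n)\to f(\delta_0)$, while monotonicity gives $f(\delta_n)\ge f(\delta_0)$. The delicate step --- and the only place the continuity of $P_0$ is indispensable --- is the uniform-continuity bound on $d_L(Q_{\alpha},P_0)$; without it (as the counterexample with $P_0$ a point mass in Section~III.B illustrates), the ratio above need not tend to zero and a jump of size $\log 2$ or similar can appear.
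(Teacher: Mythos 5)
Your proof is correct, and while the right-continuity half coincides with the paper's argument (near-minimizers in the compact space $\mathcal{M}$, a weak limit in $\bar{B}_L(P_0,\delta_0)$, lower semicontinuity of the KLD, and (\ref{eq:exchange1})), the left-continuity half takes a genuinely different and in some ways cleaner route. The paper mixes the minimizer $P_{\delta}$ for a smaller radius $\delta<\delta_0$ with the minimizer $P_{\delta_0}$, asserts without computation that $d_L(\lambda P_{\delta}+(1-\lambda)P_{\delta_0},P_0)<\delta_0$ for all $\lambda\in(0,1)$ when $P_0$ is continuous, and sends $\lambda\to 0^+$ using convexity; that argument implicitly needs $D(\mu\|P_{\delta})<\infty$ for the term $\lambda D(\mu\|P_{\delta})$ to vanish. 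You instead mix the minimizer $P^{\ast}$ at radius $\delta_0$ with the center $P_0$ itself, use the measure-domination bound $D(\mu\|Q_{\alpha})\le D(\mu\|P^{\ast})-\log(1-\alpha)$ (which requires no finiteness of $D(\mu\|P_0)$ or of any auxiliary divergence), and give an explicit estimate, via the modulus of uniform continuity of $P_0$, of how small $\alpha$ may be taken while keeping $Q_{\alpha}$ inside $B_L(P_0,\epsilon)$ for $\epsilon<\delta_0$. This makes quantitatively visible exactly where the continuity hypothesis on $P_0$ enters --- the paper only gestures at this --- and it sidesteps the finiteness issue in the paper's convexity step. Your handling of the $f(\delta_0)=\infty$ case and your use of near-minimizers (rather than exact minimizers guaranteed by (\ref{eq:achieves_infimum})) for right continuity are both sound. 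The one point worth tightening is the phrase ``$d_L$ is weakly continuous'': what you actually need, and what holds, is that $\bar{B}_L(P_0,\delta)$ is weakly closed in $\mathcal{M}$ (equivalently, lower semicontinuity of $d_L(\cdot,P_0)$), which is also the property the paper relies on at the same step.
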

	\begin{proof}
		Let $\delta\in(0,\delta_0)$. $D(\mu||B_L(P_0,\delta_0))$ is left continuity in $\delta_0$ if
		$D(\mu||\{P\in\mathcal{P}: d_L(P,P_0)<\delta_0\})= D(\mu||B_L(P_0,\delta_0))$.		It is easy to see that 
$$D(\mu||\{P\in\mathcal{P}: d_L(P,P_0)<\delta_0\})\geq D(\mu||B_L(P_0,\delta_0)).$$ 
Thus we only need to show the other direction.
%		\begin{eqnarray}
%		\label{eq:kld_left_conti}
%		D(\mu||\{P\in\mathcal{P}: d_L(P,P_0)<\delta_0\})\leq D(\mu||B_L(P_0,\delta_0)).
%		\end{eqnarray}
		Denote 
		$$P_{\delta}=\arginf_{\{P\in B_L(P_0,\delta)\}}D(\mu||P),$$ $$P_{\delta_0}=\arginf_{\{P\in B_L(P_0,\delta_0)\}}D(\mu||P).$$
		The existence of $P_{\delta}$ and $P_{\delta_0}$ is guaranteed by (\ref{eq:achieves_infimum}). For any $0<\lambda<1$, $$d_L(\lambda P_{\delta}+(1-\lambda)P_{\delta_0},P_0)<\delta_0,$$
		which may not hold if $P_0(t)$ is not continuous in $t$. Then,
		\begin{eqnarray}
%		&&D(\mu||\{P\in\mathcal{P}: d_L(P,P_0)<\delta_0\}) \nonumber \\
D(\mu||\{P\in\mathcal{P}: d_L(P,P_0)<\delta_0\}) 
	&\le&\lim_{\lambda\to0+}
		D(\mu||\lambda P_{\delta}+(1-\lambda)P_{\delta_0}) \nonumber \\
		&\le&\lim_{\lambda\to0+}\lambda D(\mu|| P_{\delta})+(1-\lambda)D(\mu|| P_{\delta_0}) \nonumber \\
		&=&D(\mu||P_{\delta_0}) \nonumber\\
		&=&D(\mu||B_L(P_0,\delta_0)). \nonumber
		\end{eqnarray}
		Therefore $D(\mu||B_L(P_0,\delta_0))$ is left continuous in $\delta_0$. 
		
		The rest is to show $D(\mu||B_L(P_0,\delta_0))$ is right continuous in $\delta_0$. Since $D(\mu||B_L(P_0,\delta_0))$ is decreasing in $\delta_0$, we only need to show:
		\begin{eqnarray}
		\lim_{n\to\infty}D\left(\mu||B_L(P_0,\delta_0+\frac{1}{n})\right)\ge D(\mu||B_L(P_0,\delta_0)).
		\end{eqnarray}
		From (\ref{eq:achieves_infimum}), there exists $P_n\in B_L(P_0,\delta_0+\frac{1}{n})$ such that $D(\mu||P_n)=D\left(\mu||B_L(P_0,\delta_0+\frac{1}{n})\right)$. $\mathcal{M}$ is compact, $P_n$ converges to $P^*\in\mathcal{M}$. Since $P^*\in\ \bar{B}_L(P_0,\delta_0+\frac{1}{n})$ for any $n$, $P^*\in\bar{B}_L(P_0,\delta_0)$. We have,
		\begin{eqnarray}
%		&&\lim_{n\to\infty}D\left(\mu||B_L(P_0,\delta_0+\frac{1}{n})\right)\nonumber \nonumber\\
\lim_{n\to\infty}D\left(\mu||B_L(P_0,\delta_0+\frac{1}{n})\right)&=&\lim_{n\to\infty}D(\mu||P_n) \nonumber\\
		&\ge&D(\mu||P^*)  \label{eq:091701} \nonumber\\
		&\ge&D(\mu||\bar{B}_L(P_0,\delta_0)) \nonumber\\
		&=&D(\mu||B_L(P_0,\delta_0)),\nonumber
		\end{eqnarray}
		where (\ref{eq:091701}) comes from the fact that the KLD is lower semicontinuous and the last equality was proved in (\ref{eq:exchange1}). Therefore $D(\mu||B_L(P_0,\delta_0))$ is right continuous in $\delta_0$.
	\end{proof}

	\begin{lem}
		\label{lem:convex}
		For $\mu, P_0\in\mathcal{P}$ and $\delta_0>0$, $D(\mu||B_L(P_0,\delta_0))$ is a convex function of $\mu.$ In addition, for any partition $\mathcal{A}$ of the real line, $D(\mu^\mathcal{A}||B_L^\mathcal{A}(P_0,\delta_0))$ is convex in $\mu^\mathcal{A}.$
	\end{lem}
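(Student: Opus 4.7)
The plan is to reduce both statements to the joint convexity of the classical KLD once we know that the relevant Lévy balls are themselves convex sets. So I would proceed in three short steps.

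First I would verify that $B_L(P_0,\delta_0)$ is a convex subset of $\mathcal{P}$. The characterization $d_L(P,P_0)\le\delta_0$ is equivalent to the two pointwise inequalities $P_0(x-\delta_0)-\delta_0\le P(x)\le P_0(x+\delta_0)+\delta_0$ for every $x\in\mathcal{R}$. Both inequalities are affine in $P$ with fixed right/left-hand sides, so if $P_1,P_2\in B_L(P_0,\delta_0)$ and $\lambda\in[0,1]$, then $\lambda P_1+(1-\lambda)P_2$ satisfies exactly the same pair of inequalities and is therefore also in $B_L(P_0,\delta_0)$. This is a routine check but it is the one structural fact the whole argument rests on.

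Second, given any two distributions $\mu_1,\mu_2\in\mathcal{P}$ and $\lambda\in[0,1]$, I would invoke (\ref{eq:achieves_infimum}) from the proof of Lemma \ref{lem:pequalm} to pick minimizers $P_1,P_2\in B_L(P_0,\delta_0)$ with $D(\mu_i\|P_i)=D(\mu_i\|B_L(P_0,\delta_0))$ for $i=1,2$. The convex combination $P_\lambda:=\lambda P_1+(1-\lambda)P_2$ lies in $B_L(P_0,\delta_0)$ by the first step, and the joint convexity of the classical KLD (stated in Section II.A) gives
\begin{align*}
D(\lambda\mu_1+(1-\lambda)\mu_2\,\|\,B_L(P_0,\delta_0))
&\le D(\lambda\mu_1+(1-\lambda)\mu_2\,\|\,P_\lambda)\\
&\le \lambda D(\mu_1\|P_1)+(1-\lambda)D(\mu_2\|P_2)\\
&= \lambda D(\mu_1\|B_L(P_0,\delta_0))+(1-\lambda)D(\mu_2\|B_L(P_0,\delta_0)),
\end{align*}
which is exactly the convexity claim for the robust KLD.

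Third, for the partitioned version I would observe that the map $P\mapsto P^{\mathcal{A}}$ is linear, so $B_L^{\mathcal{A}}(P_0,\delta_0)$ is the image of the convex set $B_L(P_0,\delta_0)$ under a linear map and is therefore itself convex in $\mathcal{R}^{|\mathcal{A}|}$. The discrete KLD in (\ref{eq:kld1}) is also jointly convex, and the infimum defining $D(\mu^{\mathcal{A}}\|B_L^{\mathcal{A}}(P_0,\delta_0))$ is attained on this finite-dimensional convex set (it is closed and the discrete KLD is lower semicontinuous with the first argument fixed). Repeating verbatim the argument of the second step, with $\mu_i$ replaced by $\mu_i^{\mathcal{A}}$ and the minimizers taken in $B_L^{\mathcal{A}}(P_0,\delta_0)$, yields the convexity of $D(\mu^{\mathcal{A}}\|B_L^{\mathcal{A}}(P_0,\delta_0))$ in $\mu^{\mathcal{A}}$. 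There is no real obstacle here; the only point that requires any care is the convexity of the Lévy ball, and once that is in place the two convexity statements follow immediately from joint convexity of the KLD and the existence of minimizers already established in Lemma \ref{lem:pequalm}.
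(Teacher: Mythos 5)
Your proposal is correct and follows essentially the same route as the paper: pick the minimizers $P_1,P_2$ in the Lévy ball (whose existence is guaranteed by (\ref{eq:achieves_infimum})), form their convex combination, and invoke joint convexity of the KLD. The only difference is that you spell out the convexity of $B_L(P_0,\delta_0)$ and the linearity of the quantization map, which the paper asserts without comment.
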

	\begin{proof}
		Let $P_i=\arg\inf_{\{P\in B_L(P_0,\delta_0)\}}D(\mu_i||P)$, $i=1, 2$. For any $0<\lambda<1$, $\lambda P_1+(1-\lambda)P_2\in B_L(P_0,\delta_0)$, thus,
		\begin{eqnarray}
%		&&D(\lambda\mu_1+(1-\lambda)\mu_2||B_L(P_0,\delta_0))\nonumber \\
D(\lambda\mu_1+(1-\lambda)\mu_2||B_L(P_0,\delta_0))
		&\le& D(\lambda\mu_1+(1-\lambda)\mu_2||\lambda P_1+(1-\lambda)P_2) \nonumber \\
		&\le& \lambda D(\mu_1|| P_1)+(1-\lambda)D(\mu_2||P_2) \nonumber\\
		&=&\lambda D(\mu_1||B_L(P_0,\delta_0))+(1-\lambda)D(\mu_2||B_L(P_0,\delta_0)). \nonumber
		\end{eqnarray}
		Therefore, $D(\mu||B_L(P_0,\delta_0))$ is a convex function of $\mu$. That $D(\mu^\mathcal{A}||B_L^\mathcal{A}(P_0,\delta_0))$ is convex in $\mu^\mathcal{A}$ follows a similar argument.
	\end{proof}
	
	\begin{lem}
		\label{lem:suptor}
		Given $\mu_0, P_0\in\mathcal{P}$ and $\delta, \delta_0>0$, we have $$\sup_{\mu\in B_L(\mu_0,\delta)}D(\mu||B_L(P_0,\delta_0))=\sup_{x\in\mathcal{R}}D(\mu_x^{\delta}||B_L(P_0,\delta_0)),$$
		where 
		\begin{eqnarray}
		\mu_x^{\delta}(t)=
		\begin{cases}
		\max(0,\mu_0(t-\delta)-\delta))       & \quad \text{if } t<x,\\
		\min(1,\mu_0(t+\delta)+\delta))  & \quad \text{if } t\ge x.
		\end{cases}\nonumber
		\end{eqnarray}
	\end{lem}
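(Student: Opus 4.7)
The $\ge$ direction is immediate once I verify $\mu_x^\delta \in B_L(\mu_0,\delta)$. Writing $L(t)=\max(0,\mu_0(t-\delta)-\delta)$ and $U(t)=\min(1,\mu_0(t+\delta)+\delta)$, both branches of the definition of $\mu_x^\delta$ satisfy $L(t)\le\mu_x^\delta(t)\le U(t)$ by inspection, so the Lévy envelope holds; and $\mu_x^\delta$ is non-decreasing because $L$ and $U$ are each non-decreasing and the jump at $x$ goes from $L(x^-)$ to $U(x)\ge L(x^-)$. Hence $\sup_x D(\mu_x^\delta\|B_L(P_0,\delta_0))$ is a supremum over a distinguished subset of $B_L(\mu_0,\delta)$ and is bounded above by the left-hand side.

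For the nontrivial $\le$ direction I would combine the convexity of $\mu\mapsto D(\mu\|B_L(P_0,\delta_0))$ established in Lemma \ref{lem:convex} with a characterization of the extreme points of $B_L(\mu_0,\delta)$. As used in the proof of Lemma \ref{lem:pequalm}, $B_L(\mu_0,\delta)$ is closed, and hence compact, in the weakly compact space $\mathcal{M}$. Since $D(\cdot\|B_L(P_0,\delta_0))$ is convex and, by Theorem \ref{thm:continuous}, continuous on $\mathcal{P}$, Bauer's maximum principle says its supremum over this compact convex set is attained at an extreme point.

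The core step is then to show that every extreme CDF $\nu\in B_L(\mu_0,\delta)$ coincides with some $\mu_x^\delta$. First, extremality forces $\nu(t)\in\{L(t),U(t)\}$ at every $t$: if instead $\nu$ lay strictly between $L$ and $U$ throughout an open interval, a small bump supported there could be added to and subtracted from $\nu$ to produce two distinct CDFs in $B_L(\mu_0,\delta)$ whose average is $\nu$, contradicting extremality. Second, the monotonicity of $\nu$ rules out more than a single $L$-to-$U$ switch: if $\nu$ equaled $U$ at $t_1$ but $L$ at some $t_2>t_1$, then $U(t_1)\le\nu(t_1)\le\nu(t_2)=L(t_2)\le U(t_2)$, which either collapses $L\equiv U$ on $[t_1,t_2]$ (making the apparent double switch benign) or contradicts $L\le U$. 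The unique transition point $x$ so produced identifies $\nu$ with $\mu_x^\delta$, closing the loop.

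The main obstacle I anticipate is making the bump perturbation rigorous while simultaneously preserving right-continuity, monotonicity, and the two-sided Lévy envelope, which is delicate at points where $\mu_0$ is discontinuous or where $\nu$ meets the envelope on a set with empty interior. A cleaner alternative I would pursue in parallel is to route through Lemma \ref{lem:pequalm}: restrict to a finite partition $\mathcal{A}$, on which $B_L^\mathcal{A}(\mu_0,\delta)\subset\mathbb{R}^{|\mathcal{A}|}$ becomes a bounded polytope whose vertices are explicit combinations of Lévy upper/lower bounds that are seen to match quantizations of some $\mu_x^\delta$, apply the finite-dimensional Bauer principle there, and then take $\sup_\mathcal{A}$ via Lemma \ref{lem:pequalm} to recover the continuous statement.
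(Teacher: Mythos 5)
Your $\ge$ direction is correct and matches the paper: each $\mu_x^\delta$ is a right-continuous, non-decreasing function trapped in the L\'evy envelope of $\mu_0$, so the right-hand side is a supremum over a subfamily of $B_L(\mu_0,\delta)$. Your ``cleaner alternative'' at the end is, in essence, the paper's actual proof: interchange $\sup_\mu$ with $\sup_{\mathcal A}$ via Lemma \ref{lem:pequalm}, describe $B_L^{\mathcal A}(\mu_0,\delta)$ as a polytope, push the maximum of the convex function (Lemma \ref{lem:convex}) to a vertex, identify vertices with quantizations of the $\mu_x^\delta$, and take $\sup_{\mathcal A}$ back. Your primary route, however, has two genuine gaps.

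First, it is circular. Theorem \ref{thm:continuous} is proved \emph{from} this lemma: Lemma \ref{lem:uppersemi} invokes Lemma \ref{lem:suptor} in its very first step. So you cannot appeal to the theorem's continuity to run Bauer's maximum principle. Bauer requires upper semicontinuity of the convex function on the compact set, and upper semicontinuity of $D(\cdot\|B_L(P_0,\delta_0))$ is exactly the hard half that this lemma exists to deliver; the lower semicontinuity that is available independently (Lemma \ref{lem:lowersemi}) does not even guarantee the supremum is attained, let alone attained at an extreme point.

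Second, your characterization of the extreme points is false. Take $\mu_0$ uniform on $[0,1]$ and $\delta=0.1$, so $L(t)=\max(0,t-0.2)$ and $U(t)=\min(1,t+0.2)$ on the relevant range. Let $\nu=L$ on $(-\infty,0.3)$, $\nu\equiv 0.5$ on $[0.3,0.7)$, and $\nu=U$ on $[0.7,\infty)$. Then $\nu\in B_L(\mu_0,\delta)$ and $\nu$ is extreme: any $h$ with $\nu\pm h$ both monotone must be constant on the flat stretch (since $\nu$ is constant there), and the pinning $\nu(0.3)=U(0.3)=0.5$ together with $\nu(t)-L(t)\to 0$ as $t\uparrow 0.7$ forces that constant to be zero. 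Yet $\nu(0.5)=0.5$ lies strictly between $L(0.5)=0.3$ and $U(0.5)=0.7$, so $\nu$ is not any $\mu_x^\delta$. This defeats both halves of your argument: the ``bump'' perturbation is blocked by monotonicity rather than by the envelope, and your dichotomy fails because $U(t_1)=L(t_2)$ can hold with $L<U$ strictly in between, collapsing nothing and contradicting nothing. Even with Bauer in hand you would therefore still owe a separate argument that $D(\nu\|B_L(P_0,\delta_0))\le\sup_x D(\mu_x^\delta\|B_L(P_0,\delta_0))$ for such pinned-flat extreme points. Be aware that the identical subtlety resurfaces in the finite-dimensional route: the quantization of this $\nu$ is a vertex of the polytope $B_L^{\mathcal A}(\mu_0,\delta)$, because the monotonicity constraints $x_i\ge 0$ can be active in place of the band constraints $\sum_{i\le j}x_i\in\{L_j,U_j\}$, so the vertex enumeration there (as in the paper's own write-up) must be handled with care rather than asserted.
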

	\begin{proof}
		We have
		\begin{eqnarray}
%		&&\sup_{\mu\in B_L(\mu_0,\delta)}D(\mu||B_L(P_0,\delta_0)) \nonumber\\
\sup_{\mu\in B_L(\mu_0,\delta)}D(\mu||B_L(P_0,\delta_0)) 
		&=&\sup_{\mu\in B_L(\mu_0,\delta)}\sup_{\mathcal{A}\in\Pi}D(\mu^{\mathcal{A}}||B_L^{\mathcal{A}}(P_0,\delta_0)) \label{eq:usepequalm}\\
		&=&\sup_{\mathcal{A}\in\Pi}\sup_{\mu\in B_L(\mu_0,\delta)}D(\mu^{\mathcal{A}}||B_L^{\mathcal{A}}(P_0,\delta_0))\nonumber \\
		&=&\sup_{\mathcal{A}\in\Pi}\sup_{\mu^{\mathcal{A}}\in B_L^{\mathcal{A}}(\mu_0,\delta)}D(\mu^{\mathcal{A}}||B_L^{\mathcal{A}}(P_0,\delta_0)). \label{eq:102001}
		\end{eqnarray} 
		Equality (\ref{eq:usepequalm}) comes from Lemma \ref{lem:pequalm}. 
		
		Fix a finite partition $\mathcal{A}$ of the real line. Without loss of generality we can assume $|\mathcal{A}|=n$ and
$$\mathcal{A}=\{(-\infty,a_1],(a_1,a_2],\cdots,(a_{n-2},a_{n-1}],(a_{n-1},\infty)\}.$$ 
The partition $\mathcal{A}$ over the probability space $\mathcal{P}$ can be represented as an $n$-dimensional polytope. Denote the $n-$dimensional point $\mathbf{x}=(x_1,x_2,\cdots,x_n)$,
		\begin{eqnarray}
		\mathcal{P}^{\mathcal{A}}=\{\mathbf{x}\in\mathcal{R}^n: \sum_ix_i=1 \text{ and } \forall i, 0\le x_i\le 1\}. \nonumber
		\end{eqnarray}
		Similarly, the partition $\mathcal{A}$ over the set $B_L(\mu_0,\delta)$ is also an $n$-dimensional polytope inside $\mathcal{P}^{\mathcal{A}}$, 
		\begin{eqnarray}
		B_L^{\mathcal{A}}(\mu_0,\delta)=\{\mathbf{x}\in\mathcal{P}^{\mathcal{A}}: \forall 1\le j\le n-1, L_j^{}\le\sum_{i=1}^jx_i\le U_j^{} \}, \nonumber
		\end{eqnarray}
		where $L_j^{}=\max(0,\mu_0(a_j-\delta)-\delta), U_j^{}=\min(1,\mu_0(a_j+\delta)+\delta)$. We can assume for any $1\le j\le n-2$, $U_{j}^{}>L_{j+1}^{}$, otherwise we can make $\mathcal{A}$ finer such that the new partition (denoted as $\mathcal{A}$ again) has the property that $a_{j+1}\le a_{j}+\delta$ for $1\le j\le n-2$. It can be verified that for each $1\le j\le n-2$, $U_{j}^{}>L_{j+1}^{}$. The reason that such an $\mathcal{A}$ can be finite is that $\mu_0(t)$ is a bounded non-decreasing function. 
		
		A point $\mathbf{x}$ is a vertex of $ B_L^{\mathcal{A}}(\mu_0,\delta)$ if and only if $\sum_{i=1}^jx_i$ equals $L_j^{}$ or $U_j^{}$ for any $1\le j\le n-1$, $\sum_{i=1}^nx_i=1$ and $0\le x_i\le1$. Since $U_{j}^{}>L_{j+1}^{}$ for any $1\le j\le n-2$, for a vertex $\mathbf{x}$, once $\sum_{i=1}^jx_i=U_j^{}$ for some $j$, then for any $k>j$ we have $\sum_{i=1}^k x_i=U_k^{}$. 
		
		Therefore there are $n$ vertices $\mathbf{x}^1,\cdots,\mathbf{x}^n$ of $B_L^{\mathcal{A}}(\mu_0,\delta)$ that satisfy the property that $\sum_{i=1}^{j}x^k_i=L_j^{}$ for $j<k$, $\sum_{i=1}^{j}x^k_i=U_j^{}$ for $j\ge k$. Or equivalently, if we denote $L_0^{}=0$ and $U_n^{}=1$, for $1\le k\le n$,
		\[ x_i^k =
		\begin{cases}
		L_i^{}-L_{i-1}^{}       & \quad \text{if } i<k,\\
		U_i^{}-L_{i-1}^{}  & \quad \text{if } i=k,\\
		U_i^{}-U_{i-1}^{}  & \quad \text{if } i>k.
		\end{cases}
		\]
		From Lemma \ref{lem:convex}, $D(\cdot||B_L^{\mathcal{A}}(P_0,\delta_0))$ is a convex function, thus the supremum on the polytope $B_L^{\mathcal{A}}(\mu_0,\delta)$ is achieved at its vertices. Let 
		\begin{eqnarray}
		\mu_x^{\delta}(t)=
		\begin{cases}
		\max(0,\mu_0(t-\delta)-\delta))       & \quad \text{if } t<x,\\
		\min(1,\mu_0(t+\delta)+\delta))  & \quad \text{if } t\ge x.
		\end{cases}\nonumber
		\end{eqnarray}
		Then any $\mathbf{x}^k$ is a quantization of $\mu_x^{\delta}$ over the partition $\mathcal{A}$ for some $x$.
		\begin{eqnarray}
%		&&\sup_{\mu^{\mathcal{A}}\in B_L^{\mathcal{A}}(\mu_0,\delta)}D(\mu^{\mathcal{A}}||B_L^{\mathcal{A}}(P_0,\delta_0)) \nonumber \\
\sup_{\mu^{\mathcal{A}}\in B_L^{\mathcal{A}}(\mu_0,\delta)}D(\mu^{\mathcal{A}}||B_L^{\mathcal{A}}(P_0,\delta_0)) 
		&=&
		\max_{k}D(\mathbf{x}^k||B_L^{\mathcal{A}}(P_0,\delta_0))  \nonumber \\
		&\leq&
		\sup_{x}D((\mu_x^{\delta})^{\mathcal{A}}||B_L^{\mathcal{A}}(P_0,\delta_0)) \nonumber  \\
		&\leq&\sup_{\mathcal{A}\in\Pi}\sup_{x\in\mathcal{R}}D((\mu_x^{\delta})^{\mathcal{A}}||B_L^{\mathcal{A}}(P_0,\delta_0)),  \nonumber \\
		&=&\sup_{x\in\mathcal{R}}\sup_{\mathcal{A}\in\Pi}D((\mu_x^{\delta})^{\mathcal{A}}||B_L^{\mathcal{A}}(P_0,\delta_0)),  \nonumber \\
		&=&\sup_{x\in\mathcal{R}}D(\mu_x^{\delta}||B_L(P_0,\delta_0)), \label{eq:102002}
		\end{eqnarray}	
		the last equality comes from Lemma \ref{lem:pequalm}. From (\ref{eq:102001}) and (\ref{eq:102002}), we have
		$$
		\sup_{\mu\in B_L(\mu_0,\delta)}D(\mu||B_L(P_0,\delta_0))\le\sup_{x\in\mathcal{R}}D(\mu_x^{\delta}||B_L(P_0,\delta_0)).
		$$
		For the other direction, since $\mu_x^{\delta}\in B_L(\mu_0,\delta)$, 
		$$
		\sup_{\mu\in B_L(\mu_0,\delta)}D(\mu||B_L(P_0,\delta_0))\ge\sup_{x\in\mathcal{R}}D(\mu_x^{\delta}||B_L(P_0,\delta_0)).
		$$
		Therefore, 
		$$
		\sup_{\mu\in B_L(\mu_0,\delta)}D(\mu||B_L(P_0,\delta_0))=\sup_{x\in\mathcal{R}}D(\mu_x^{\delta}||B_L(P_0,\delta_0)).
		$$
	\end{proof}
	
	A direct result of the above lemma is the boundedness of the robust KLD, which is stated below.
	\begin{lem}
		\label{pro:finite}
		$\sup_{\mu, P_0\in\mathcal{P}}D(\mu||B_L(P_0,\delta_0)=\log\frac{1}{\delta_0}$.
	\end{lem}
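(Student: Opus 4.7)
The plan is to prove both bounds directly. For the upper bound, fix any $\mu, P_0 \in \mathcal{P}$ and (without loss of generality) $\delta_0 \in (0, 1]$, and consider the mixture
\[
P^\star := (1 - \delta_0) P_0 + \delta_0 \mu.
\]
Two observations suffice. First, $|P^\star(t) - P_0(t)| = \delta_0 |\mu(t) - P_0(t)| \leq \delta_0$ for every $t$, so the Kolmogorov (uniform) distance between $P^\star$ and $P_0$ is at most $\delta_0$; since this dominates the L\'evy distance (a one-line consequence of the monotonicity of distribution functions), $P^\star \in B_L(P_0, \delta_0)$. Second, as measures $P^\star \geq \delta_0 \mu$, so $\mu \ll P^\star$ and the Radon-Nikodym derivative $d\mu/dP^\star$ is bounded by $1/\delta_0$ $\mu$-almost everywhere. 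Hence
\[
D(\mu||P^\star) = \int \log\frac{d\mu}{dP^\star}\, d\mu \leq \log\frac{1}{\delta_0},
\]
and taking the infimum over $B_L(P_0, \delta_0)$ gives $D(\mu||B_L(P_0, \delta_0)) \leq \log(1/\delta_0)$ uniformly in $\mu, P_0$.

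For the matching lower bound, I would exhibit an explicit pair that saturates this value. Take $P_0$ to be the point mass at $0$, $\mu$ the point mass at $1$, and assume $\delta_0 \in (0, 1)$. For any $P \in B_L(P_0, \delta_0)$, the defining inequality $P(t) \geq P_0(t - \delta_0) - \delta_0$ evaluated at $t$ approaching $1$ from below (where $t - \delta_0 > 0$, so $P_0(t - \delta_0) = 1$) yields $P(1^-) \geq 1 - \delta_0$, whence $P(\{1\}) \leq \delta_0$. Since $\mu$ is a point mass at $1$, $D(\mu||P) = -\log P(\{1\}) \geq \log(1/\delta_0)$, and this bound is attained by the two-point distribution putting mass $1 - \delta_0$ at $0$ and mass $\delta_0$ at $1$, which a short case analysis on $t$ (splitting at $-\delta_0, 0, \delta_0, 1$) confirms lies in $B_L(P_0, \delta_0)$. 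Therefore $D(\mu||B_L(P_0, \delta_0)) = \log(1/\delta_0)$ for this pair.

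Combining the two parts yields the claimed identity. I do not anticipate any serious obstacle: the mixture $P^\star$ serves as a universal certificate that the robust KLD never exceeds $\log(1/\delta_0)$, and the degenerate two-point example saturates the bound. The only mildly tedious ingredient is the case-by-case L\'evy-ball verification for the two-point distribution in the lower bound example, which reduces to checking a handful of intervals of $t$.
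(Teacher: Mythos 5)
Your proof is correct, and it takes a genuinely different and more elementary route than the paper's. The paper derives this lemma as a corollary of its Lemma \ref{lem:suptor}: taking $S_0$ to be the point mass at $0$ so that $\mathcal{P}=B_L(S_0,1)$, the supremum over all of $\mathcal{P}$ reduces to the one-parameter family of single-transition-point distributions $\mu_x^1$, for which $D(\mu_x^{1}||B_L(P_0,\delta_0))=\log\bigl(1/(\min(1,P_0(x+\delta_0)+\delta_0)-\max(0,P_0(x-\delta_0)-\delta_0))\bigr)$ is computed explicitly and shown to have supremum $\log(1/\delta_0)$ as $x\to\infty$. Your argument bypasses that machinery entirely: the mixture $P^\star=(1-\delta_0)P_0+\delta_0\mu$ is a clean universal certificate for the upper bound (the Kolmogorov-dominates-L\'evy step and the bound $d\mu/dP^\star\le 1/\delta_0$ are both sound), and the pair of point masses at $0$ and $1$ saturates it, with the L\'evy-ball membership checks working out as you describe. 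What each approach buys: yours is shorter, self-contained, and exhibits a pair in $\mathcal{P}\times\mathcal{P}$ that actually attains the supremum, whereas the paper's version comes essentially for free once Lemma \ref{lem:suptor} is in place and additionally identifies, for every fixed $P_0$, the form of the near-extremal $\mu$ (approached only in the limit, by finitely additive measures in $\mathcal{M}\setminus\mathcal{P}$) --- information that is reused in the upper-semicontinuity argument that follows. One cosmetic point: as you implicitly note, the identity is only meaningful for $\delta_0\in(0,1]$, since $d_L\le 1$ makes the ball all of $\mathcal{P}$ otherwise; stating that restriction explicitly would tidy up the write-up.
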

	
	\begin{proof}
		We construct a distribution $S_0\in\mathcal{P}$ such that $S_0(t)=0$ for $t<0$, and $S_0(t)=1$ for $t\ge 0$, then $\mathcal{P}=B_L(S_0,1)$ since $d_L$ is bounded by $1$. According to Lemma \ref{lem:suptor}, 
		\begin{eqnarray}
		\sup_{\mu\in B_L(S_0,1)}D(\mu||B_L(P_0,\delta_0))=\sup_{x\in\mathcal{R}}D(\mu_x^{\delta}||B_L(P_0,\delta_0)), \nonumber
		\end{eqnarray}
		where $\mu_x^{1}(t)=0$ for $t<x$, and $\mu_x^{1}(t)=1$ for $t\ge x$. Therefore,
		\begin{align}
		&\ \sup_{x\in\mathcal{R}}D(\mu_x^{\delta}||B_L(P_0,\delta_0)) \nonumber\\
		={}&\ \sup_{x\in\mathcal{R}}\log\frac{1}{\min(1,P_0(x+\delta_0)+\delta_0)-\max(0,P_0(x-\delta_0)-\delta_0)}\nonumber \\
%		={}&\ \log\frac{1}{\inf_{x\in\mathcal{R}}(\min(1,P_0(x+\delta_0)+\delta_0)-\max(0,P_0(x-\delta_0)-\delta_0))} \\
		={}&\ \log\frac{1}{\delta_0}, \nonumber
		\end{align}
		the last equality comes from the fact that $$\min(1,P_0(t+\delta_0)+\delta_0)-\max(0,P_0(t-\delta_0)-\delta_0)\ge\delta_0$$ 
		and 
		$$\lim_{x\to\infty}\min(1,P_0(x+\delta_0)+\delta_0)-\max(0,P_0(x-\delta_0)-\delta_0)=\delta_0,$$
		which means a finitely additive measure that belongs to $\mathcal{M}\setminus\mathcal{P}$ can always achieve the supremum for any $P_0$.
	\end{proof}
	
	With all the previous lemmas, we now show that robust KLD is upper semicontinuous.
	\begin{lem}
		\label{lem:uppersemi}
		Given $P_0\in\mathcal{P}$ and $\delta_0>0$, if $P_0(t)$ is continuous in $t$, then $D(\mu||B_L(P_0,\delta_0))$ is upper semicontinuous in $\mu$ with respect to the weak convergence.
	\end{lem}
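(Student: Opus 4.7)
The plan is to show upper semicontinuity in the form $\limsup_{k\to\infty} D(\mu_k \| B_L(P_0,\delta_0)) \leq D(\mu \| B_L(P_0,\delta_0))$ whenever $\mu_k \wto \mu$, by combining the equivalence between weak convergence and L\'evy convergence with Lemma \ref{lem:suptor} and Lemma \ref{lem:kld_left_conti}. First, since $\mu_k \wto \mu$ is equivalent to $d_L(\mu_k,\mu)\to 0$, for any fixed $\delta>0$ we have $\mu_k \in B_L(\mu,\delta)$ for all sufficiently large $k$, so
\[
\limsup_{k\to\infty} D(\mu_k \| B_L(P_0,\delta_0)) \leq \sup_{\nu\in B_L(\mu,\delta)} D(\nu \| B_L(P_0,\delta_0)).
\]
By Lemma \ref{lem:suptor}, the right-hand side collapses to the one-parameter form $\sup_{x\in\mathcal{R}} D(\mu_x^\delta \| B_L(P_0,\delta_0))$, reducing an otherwise infinite-dimensional maximization to a scalar supremum over the explicit extremal distributions $\mu_x^\delta$ on the boundary of $B_L(\mu,\delta)$.

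The remaining task is to let $\delta \to 0^+$ and prove the key inequality
\[
\limsup_{\delta\to 0^+} \sup_{x\in\mathcal{R}} D(\mu_x^\delta \| B_L(P_0,\delta_0)) \;\leq\; D(\mu \| B_L(P_0,\delta_0)). \qquad(\star)
\]
The intuition behind $(\star)$ is that $\mu_x^\delta$ is within L\'evy distance $\delta$ of $\mu$, and because the L\'evy ball $B_L(P_0,\delta_0)$ provides a cushion against small L\'evy perturbations, the robust KLD should be robust to such perturbations. I would formalize this by selecting, for each $\delta$, a near-maximizing $x^\delta$; extracting a weakly convergent subsequence $\mu_{x^\delta}^\delta \wto \mu^\star$ using the uniform bound $\log(1/\delta_0)$ from Lemma \ref{pro:finite} together with the compactness of $\mathcal{M}$ used in the proof of Lemma \ref{lem:pequalm}; and identifying $\mu^\star = \mu$. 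The continuity of $P_0$ enters here to rule out a residual jump at the transition point $x^\delta$ in the limit. The continuity of the robust KLD in the radius (Lemma \ref{lem:kld_left_conti}) is then invoked to trade the remaining $\delta$-sized L\'evy perturbation against a small enlargement of $B_L(P_0,\delta_0)$, closing the argument.

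The main obstacle is $(\star)$: the classical KLD is not upper semicontinuous under weak convergence, so $\mu_{x^\delta}^\delta \wto \mu$ does \emph{not} by itself yield $\limsup_\delta D(\mu_{x^\delta}^\delta \| B_L(P_0,\delta_0)) \leq D(\mu \| B_L(P_0,\delta_0))$. The argument must crucially exploit the hypothesis that $P_0$ is continuous, which prevents the atom-based construction of Section III-A from producing a counterexample, and must use Lemma \ref{lem:kld_left_conti} to convert a small L\'evy perturbation of the first argument into a controlled enlargement of the L\'evy ball around the second. Without both ingredients, the bound would fail, which is consistent with the negative examples in Section III-C for balls built from total variation or KLD, and with the degenerate $P_0$ example immediately preceding Lemma \ref{lem:uppersemi}.
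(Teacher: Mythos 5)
Your setup is the same as the paper's: reduce upper semicontinuity to showing $\lim_{\delta\to 0}\sup_{x\in\mathcal{R}}D(\mu_x^{\delta}\|B_L(P_0,\delta_0))\le D(\mu\|B_L(P_0,\delta_0))$ via Lemma \ref{lem:suptor}, with Lemma \ref{lem:kld_left_conti} entering at the end. But you correctly identify $(\star)$ as the main obstacle and then do not actually cross it. The tools you propose for $(\star)$ --- extracting a weakly convergent subsequence of $\mu_{x^\delta}^{\delta}$ and identifying its limit as $\mu$ --- buy nothing: weak convergence of the subsequence is automatic (since $d_L(\mu_x^{\delta},\mu)\le\delta\to 0$, no compactness is needed), and, as you yourself note, weak convergence of the first argument gives no upper bound on the KLD; compactness-plus-lower-semicontinuity is the mechanism for the \emph{lower} semicontinuity direction (Lemma \ref{lem:lowersemi}), not this one. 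Likewise, the final move of ``trading the $\delta$-sized L\'evy perturbation of the first argument against a small enlargement of the ball around $P_0$'' is not a consequence of Lemma \ref{lem:kld_left_conti} or of anything else stated: there is no inequality of the form $D(\nu\|B_L(P_0,\delta_0))\le D(\mu\|B_L(P_0,\delta_0-\delta))$ for $d_L(\nu,\mu)\le\delta$, and proving such a transfer for the extremal perturbations $\mu_x^{\delta}$ is precisely the hard content of the lemma.

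The missing idea is an explicit competitor inside $B_L(P_0,\delta_0)$ for each $\mu_x^{\delta}$. The paper fixes $\delta_1\in(0,\delta_0)$, takes the minimizer $P_{\delta_0-\delta_1}$ of $D(\mu_0\|\cdot)$ over the \emph{smaller} ball $B_L(P_0,\delta_0-\delta_1)$, and builds $P_{\delta_0-\delta_1}^{x}$ by shifting $P_{\delta_0-\delta_1}$ horizontally by $\pm\delta$ and compressing it vertically by the factor $1-\delta_1$ (adding the offset $\delta_1$ on the upper branch), glued at the \emph{same} transition point $x$ as $\mu_x^{\delta}$. The triangle inequality puts $P_{\delta_0-\delta_1}^{x}$ in $B_L(P_0,\delta_0)$, the vertical slack $\delta_1$ absorbs the jump of $\mu_x^{\delta}$ at $x$ (this is where absolute continuity is preserved), and a case analysis on the location of $x$ using the log-sum inequality gives the uniform bound $D(\mu_x^{\delta}\|P_{\delta_0-\delta_1}^{x})\le o_\delta(1)+\log\frac{1}{1-\delta_1}+D(\mu_0\|B_L(P_0,\delta_0-\delta_1))$. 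Only then does Lemma \ref{lem:kld_left_conti} close the argument, by sending $\delta_1\to 0$ (this is also where the continuity of $P_0(t)$ is actually used). Without this construction and the uniform-in-$x$ estimate, your proof of $(\star)$ is a restatement of the claim rather than a proof of it.
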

	\begin{proof}
		For any fixed $\mu_0\in\mathcal{P}$, the statement is equivalent to proving that when $\delta\to0$,
		\begin{eqnarray}
		\lim_{\delta\to0}\sup_{\mu\in B_L(\mu_0,\delta)}D(\mu||B_L(P_0,\delta_0))\le D(\mu_0||B_L(P_0,\delta_0)).
		\end{eqnarray}
		From Lemma \ref{lem:suptor}, it is equivalent to proving
		$$\lim_{\delta\to0}\sup_{x\in\mathcal{R}}D(\mu_x^{\delta}||B_L(P_0,\delta_0)) \le D(\mu_0||B_L(P_0,\delta_0)).$$ 
		Denote $u_{-\delta}$ as the left boundary of support set of distribution $\mu(t+\delta)$ and $u_{-\delta}^{\delta}$ as the infimum $x$ such that $\mu(x+\delta)=1-\delta$. % and $u_{-\delta}$
% and $u_{-\delta}^{\delta}$ may not be finite. 
Similarly, denote $u_{\delta}$ as the left boundary of distribution $\mu(t-\delta))$ and $u_{\delta}^{\delta}$ as the infimum $x$ such that $\mu(x-\delta)=1$. %, and $u_{\delta}$ and $u_{\delta}^{\delta}$ may not be finite.
 Fig.~\ref{fig:udelta} illustrates these locations.  Note that these boundary points may not be finite.
		\begin{figure}[thb!]
			\begin{center}
				\includegraphics[width=3.8in]{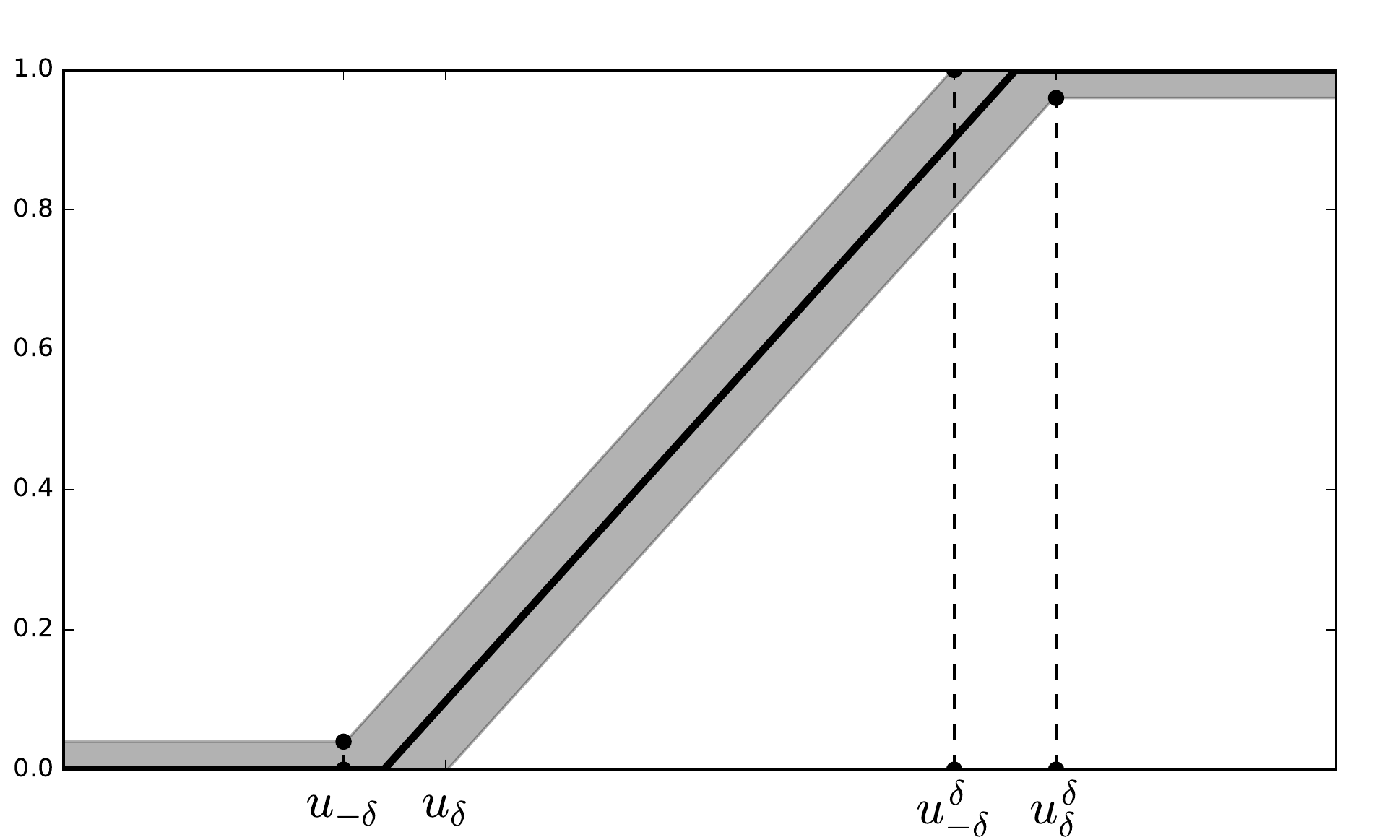}
				\caption[Illustration of $u_{-\delta}, u_{\delta}, u_{-\delta}^{\delta}$ and $u_{\delta}^{\delta}$]{Illustration of $u_{-\delta}, u_{\delta}, u_{-\delta}^{\delta}$ and $u_{\delta}^{\delta}$. The solid line represents $\mu_0$ and shaded region represents $B_L(\mu_0,\delta).$}\label{fig:udelta}
			\end{center}
		\end{figure}
		We will first prove for any $\delta_1\in (0, \delta_0)$, 
%$$\lim_{\delta\to0}\sup_{\mu\in B_L(\mu_0,\delta)}D(\mu||B_L(P_0,\delta_0))\le D(\mu_0||B_L(P_0,\delta_0-\delta_1)).$$  
\begin{equation}
\lim_{\delta\to0}\sup_{x\in\mathcal{R}}D(\mu_x^{\delta}||B_L(P_0,\delta_0))\le D(\mu_0||B_L(P_0,\delta_0-\delta_1)). \label{eq:uppersemi}
\end{equation}
Now fix $\delta_1$, we then establish that $D(\mu_x^{\delta}||B_L(P_0,\delta_0))$ can be uniformly bounded as $x$ varies. Denote $$P_{\delta_0-\delta_1}:=\arg\inf_{\{P\in B_L(P_0,\delta_0-\delta_1)\}}D(\mu_0||P).$$
		For fixed $\delta<\delta_1$, let $$P_{\delta_0-\delta_1}^{\delta,u}(t)=(1-\delta_1)P_{\delta_0-\delta_1}(t+\delta)+\delta_1,$$
		and $$P_{\delta_0-\delta_1}^{\delta,l}(t)=(1-\delta_1)P_{\delta_0-\delta_1}(t-\delta).$$
		To get $P_{\delta_0-\delta_1}^{\delta,u}(t)$, we first shift $P_{\delta_0-\delta_1}(t)$ to the left by $\delta$, then scale it by $(1-\delta_1)$ and shift it up by $\delta_1$; similarly to get $P_{\delta_0-\delta_1}^{\delta,l}(t)$, we shift $P_{\delta_0-\delta_1}(t)$ to the right by $\delta$, then scale it by $(1-\delta_1)$. Clearly
		\begin{eqnarray}
		\label{eq:left_right_dl}
		d_L(P_{\delta_0-\delta_1}^{\delta,u},P_{\delta_0-\delta_1})\leq\delta_1,\quad 
		d_L(P_{\delta_0-\delta_1}^{\delta,l},P_{\delta_0-\delta_1})\leq\delta_1. \nonumber
		\end{eqnarray} 
		
		For any $x$, construct $P_{\delta_0-\delta_1}^x$ in a similar way as $\mu_x^{\delta}$, 
		\begin{eqnarray}
		P_{\delta_0-\delta_1}^x(t)=
		\begin{cases}
		P_{\delta_0-\delta_1}^{\delta,l}(t)  & \quad \text{if } t<x,\\
		P_{\delta_0-\delta_1}^{\delta,u}(t)  & \quad \text{if } t\ge x.
		\end{cases}\nonumber
		\end{eqnarray}
		$P_{\delta_0-\delta_1}^x\in B_L(P_0,\delta_0)$ since 
		\begin{eqnarray}
		d_L(P_{\delta_0-\delta_1}^x,P_0)&\le& d_L(P_{\delta_0-\delta_1}^x,P_{\delta_0-\delta_1})+d_L(P_{\delta_0-\delta_1},P_0) \nonumber\\
		&\le&\delta_1+(\delta_0-\delta_1)=\delta_0, \nonumber
		\end{eqnarray}
		where the first inequality holds because $(\mathcal{P},d_L)$ is a metric space (i.e., $d_L$ satisfies the triangle inequality); the second inequality comes from (\ref{eq:left_right_dl}) and the definition of $P_{\delta_0-\delta_1}$.
		
		From Lemma \ref{pro:finite} $D(\mu_0||P_{\delta_0-\delta_1})=D(\mu_0||B_L(P_0,\delta_0-\delta_1))<\infty$, therefore $\mu_0$ is absolutely continuous with respect to $P_{\delta_0-\delta_1}$. From the construction of $\mu_x^{\delta}$ and $P_{\delta_0-\delta_1}^x$, we can see that $\mu_x^{\delta}$ is absolutely continuous with respect to $P_{\delta_0-\delta_1}^x$ as well. Therefore, we have
		\begin{eqnarray}
%		&&\lim_{\delta\to0}\sup_{x\in\mathcal{R}}D(\mu_x^{\delta}||B_L(P_0,\delta_0)) \nonumber \\ 
\lim_{\delta\to0}\sup_{x\in\mathcal{R}}D(\mu_x^{\delta}||B_L(P_0,\delta_0)) &=&\lim_{\delta\to0}\sup_{x\in\mathcal{R}}\inf_{\{P\in B_L(P_0,\delta_0)\}}D(\mu_x^{\delta}||P) \nonumber\\ &\le&\lim_{\delta\to0}\sup_{x\in\mathcal{R}}D(\mu_x^{\delta}||P_{\delta_0-\delta_1}^x),\nonumber
		\end{eqnarray}
establishing (\ref{eq:uppersemi}).
		We now prove $D(\mu_x^{\delta}||P_{\delta_0-\delta_1}^x)$ can be uniformly bounded as $x$ varies. Inequalities appear in  cases 1)-3) are due to the log sum inequality unless otherwise stated.
		\begin{enumerate}
			\item
			For $x< u_{-\delta}$, 
			\small
\begin{eqnarray}
\ D(\mu_x^{\delta}||P_{\delta_0-\delta_1}^x) &	\leq{}&		\delta\log\frac{\delta}{\delta_1}+ \int_{u_{-\delta}}^{u_{-\delta}^{\delta}}d(\mu_0(t+\delta)+\delta)  \log\frac{d(\mu_0(t+\delta)+\delta)}{d((1-\delta_1)P_{\delta_0-\delta_1}(t+\delta)+\delta_1)} \nonumber \\
&		=&  \delta\log\frac{\delta}{\delta_1}+\int_{u_{-\delta}}^{u_{-\delta}^{\delta}}d(\mu_0(t+\delta))\log\frac{d(\mu_0(t+\delta))}{(1-\delta_1)d(P_{\delta_0-\delta_1}(t+\delta))}  \label{eq:degenerate1} \\
			&=&  \delta\log\frac{\delta}{\delta_1}+\int_{u_{-\delta}}^{u_{-\delta}^{\delta}}d(\mu_0(t+\delta))\log\frac{1}{(1-\delta_1)}
+\int_{u_{-\delta}}^{u_{-\delta}^{\delta}}d(\mu_0(t+\delta))\log\frac{d(\mu_0(t+\delta))}{d(P_{\delta_0-\delta_1}(t+\delta))} \nonumber\\
&			=& \delta\log\frac{\delta}{\delta_1}+(1-\delta)\log\frac{1}{(1-\delta_1)} +\int_{u_{-\delta}+\delta}^{u_{-\delta}^{\delta}+\delta}d(\mu_0(t))\log\frac{d(\mu_0(t))}{d(P_{\delta_0-\delta_1}(t))},  \nonumber
\end{eqnarray}
%			\begin{align}
%			&\ D(\mu_x^{\delta}||P_{\delta_0-\delta_1}^x)\notag\\
%			\leq{}&\        						\delta\log\frac{\delta}{\delta_1}+ \nonumber \\
%&\int_{u_{-\delta}}^{u_{-\delta}^{\delta}}d(\mu_0(t+\delta)+\delta)  \log\frac{d(\mu_0(t+\delta)+\delta)}{d((1-\delta_1)P_{\delta_0-\delta_1}(t+\delta)+\delta_1)} \nonumber \\
%			={}&\  \delta\log\frac{\delta}{\delta_1}+\int_{u_{-\delta}}^{u_{-\delta}^{\delta}}d(\mu_0(t+\delta))\log\frac{d(\mu_0(t+\delta))}{(1-\delta_1)d(P_{\delta_0-\delta_1}(t+\delta))}  \label{eq:degenerate1} \\
%			={}&\  \delta\log\frac{\delta}{\delta_1}+\int_{u_{-\delta}}^{u_{-\delta}^{\delta}}d(\mu_0(t+\delta))\log\frac{1}{(1-\delta_1)}\nonumber\\
%			&+\int_{u_{-\delta}}^{u_{-\delta}^{\delta}}d(\mu_0(t+\delta))\log\frac{d(\mu_0(t+\delta))}{d(P_{\delta_0-\delta_1}(t+\delta))} \nonumber\\
%			={}&\ \delta\log\frac{\delta}{\delta_1}+(1-\delta)\log\frac{1}{(1-\delta_1)} \nonumber \\
%			&+\int_{u_{-\delta}+\delta}^{u_{-\delta}^{\delta}+\delta}d(\mu_0(t))\log\frac{d(\mu_0(t))}{d(P_{\delta_0-\delta_1}(t))},  \nonumber
%			\end{align}
			\normalsize
			when $\delta\to0$, the above converges to $$\log\frac{1}{(1-\delta_1)}+D(\mu_0||P_{\delta_0-\delta_1}).$$
			\item
			For $u_{-\delta}\le x\le u_{\delta}$, 
			\small
\begin{eqnarray}
	D(\mu_x^{\delta}||P_{\delta_0-\delta_1}^x)
&=& (u_0(x+\delta)+\delta)\log\frac{(u_0(x+\delta)+\delta)}{(1-\delta_1)P_{\delta_0-\delta_1}(x+\delta)+\delta_1}  \nonumber\\
&& +\int_{x+}^{u_{-\delta}^{\delta}}d(\mu_0(t+\delta)+\delta)\log\frac{d(\mu_0(t+\delta)+\delta)}{d((1-\delta_1)P_{\delta_0-\delta_1}(t+\delta)+\delta_1)} \nonumber\\
&	\le& \delta\log\frac{\delta}{\delta_1}+(u_0(x+\delta))\log\frac{(u_0(x+\delta))}{(1-\delta_1)P_{\delta_0-\delta_1}(x+\delta)} \nonumber \\
&&+\int_{x+}^{u_{-\delta}^{\delta}}d(\mu_0(t+\delta))\log\frac{d(\mu_0(t+\delta))}{(1-\delta_1)d(P_{\delta_0-\delta_1}(t+\delta))} \nonumber \\
		&	\le& \delta\log\frac{\delta}{\delta_1}+\int_{u_{-\delta}}^{x}d(\mu_0(t+\delta))\log\frac{d(\mu_0(t+\delta))}{(1-\delta_1)d(P_{\delta_0-\delta_1}(t+\delta))}\nonumber \\
&&			+\int_{x+}^{u_{-\delta}^{\delta}}d(\mu_0(t+\delta))\log\frac{d(\mu_0(t+\delta))}{(1-\delta_1)d(P_{\delta_0-\delta_1}(t+\delta))} \nonumber \\
&			=& \delta\log\frac{\delta}{\delta_1}+\int_{u_{-\delta}}^{u_{-\delta}^{\delta}}d(\mu_0(t+\delta))\log\frac{d(\mu_0(t+\delta))}{(1-\delta_1)d(P_{\delta_0-\delta_1}(t+\delta))}, \label{eq:degenerate2}
			\end{eqnarray}
%			\begin{align}
%			&\ D(\mu_x^{\delta}||P_{\delta_0-\delta_1}^x)\nonumber\\
%			={}&\ (u_0(x+\delta)+\delta)\log\frac{(u_0(x+\delta)+\delta)}{(1-\delta_1)P_{\delta_0-\delta_1}(x+\delta)+\delta_1} \nonumber \\
%			{}&\ +\int_{x+}^{u_{-\delta}^{\delta}}d(\mu_0(t+\delta)+\delta)\log\frac{d(\mu_0(t+\delta)+\delta)}{d((1-\delta_1)P_{\delta_0-\delta_1}(t+\delta)+\delta_1)} \nonumber\\
%			\le{}&\ \delta\log\frac{\delta}{\delta_1}+(u_0(x+\delta))\log\frac{(u_0(x+\delta))}{(1-\delta_1)P_{\delta_0-\delta_1}(x+\delta)}\nonumber \\
%			&\ +\int_{x+}^{u_{-\delta}^{\delta}}d(\mu_0(t+\delta))\log\frac{d(\mu_0(t+\delta))}{(1-\delta_1)d(P_{\delta_0-\delta_1}(t+\delta))} \nonumber \\
%			\le{}&\ \delta\log\frac{\delta}{\delta_1}+\int_{u_{-\delta}}^{x}d(\mu_0(t+\delta))\log\frac{d(\mu_0(t+\delta))}{(1-\delta_1)d(P_{\delta_0-\delta_1}(t+\delta))}\nonumber \\
%			&\ +\int_{x+}^{u_{-\delta}^{\delta}}d(\mu_0(t+\delta))\log\frac{d(\mu_0(t+\delta))}{(1-\delta_1)d(P_{\delta_0-\delta_1}(t+\delta))} \nonumber \\
%			={}&\ \delta\log\frac{\delta}{\delta_1}+\int_{u_{-\delta}}^{u_{-\delta}^{\delta}}d(\mu_0(t+\delta))\log\frac{d(\mu_0(t+\delta))}{(1-\delta_1)d(P_{\delta_0-\delta_1}(t+\delta))}, \label{eq:degenerate2}
%			\end{align}
			\normalsize
			which degenerates to the case of $x< u_{-\delta}$ since (\ref{eq:degenerate2}) is the same as (\ref{eq:degenerate1}).
			
			\item
			For $u_{\delta}<x\le u_{-\delta}^{\delta}$,
			\allowdisplaybreaks
			\small
			\begin{eqnarray*}
		 D(\mu_x^{\delta}||P_{\delta_0-\delta_1}^x)&	=& \int_{u_{\delta}}^{x-}d(\mu_0(t-\delta)-\delta)\log\frac{d(\mu_0(t-\delta)-\delta)}{d((1-\delta_1)P_{\delta_0-\delta_1}(t-\delta))}\nonumber\\ 
&			& +(\mu_0(x+\delta)+\delta-(\mu_0(x-\delta)-\delta)) \log\frac{(\mu_0(x+\delta)+\delta-(\mu_0(x-\delta)-\delta))}{(1-\delta_1)P_{\delta_0-\delta_1}(t+\delta)+\delta_1-(1-\delta_1)P_{\delta_0-\delta_1}(t-\delta)}\nonumber\\
&&\ +\int_{x+}^{u_{-\delta}^{\delta}}d(\mu_0(t+\delta)+\delta)\log\frac{d(\mu_0(t+\delta)+\delta)}{d((1-\delta_1)P_{\delta_0-\delta_1}(t+\delta)+\delta_1)} \nonumber \\
&=& \int_{u_{\delta}}^{x-}d(\mu_0(t-\delta))\log\frac{d(\mu_0(t-\delta))}{(1-\delta_1)dP_{\delta_0-\delta_1}(t-\delta)}\nonumber\\
	&		& +(2\delta+\mu_0(x+\delta)-\mu_0(x-\delta)) \log\frac{2\delta+\mu_0(x+\delta)-\mu_0(x-\delta)}{\delta_1+(1-\delta_1)(P_{\delta_0-\delta_1}(t+\delta)-P_{\delta_0-\delta_1}(t-\delta))}\nonumber\\
&			& +\int_{x+}^{u_{-\delta}^{\delta}}d(\mu_0(t+\delta))\log\frac{d(\mu_0(t+\delta))}{(1-\delta_1)dP_{\delta_0-\delta_1}(t+\delta)} \nonumber\\
			&\le{}& \int_{u_{\delta}}^{x-}d(\mu_0(t-\delta))\log\frac{d(\mu_0(t-\delta))}{(1-\delta_1)dP_{\delta_0-\delta_1}(t-\delta)}\nonumber\\
		&	&+2\delta\log\frac{2\delta}{\delta_1}+\int_{x-\delta}^{x+\delta}d\mu_0(t)\log\frac{d\mu_0(t)}{(1-\delta_1)dP_{\delta_0-\delta_1}(t)}\nonumber\\
		&	&+\int_{x+}^{u_{-\delta}^{\delta}}d(\mu_0(t+\delta))\log\frac{d(\mu_0(t+\delta))}{(1-\delta_1)dP_{\delta_0-\delta_1}(t+\delta)} \nonumber \\
			&=&  2\delta\log\frac{2\delta}{\delta_1}+\int_{u_{\delta}-\delta}^{u_{-\delta}^{\delta}+\delta}d\mu_0(t)\log\frac{d\mu_0(t)}{(1-\delta_1)dP_{\delta_0-\delta_1}(t)}\nonumber \\
			&=& 
			2\delta\log\frac{2\delta}{\delta_1}+(1-2\delta)\log\frac{1}{1-\delta_1}+\int_{u_{\delta}-\delta}^{u_{-\delta}^{\delta}+\delta}d\mu_0(t)\log\frac{d\mu_0(t)}{dP_{\delta_0-\delta_1}(t)}, \nonumber
			\end{eqnarray*} 
			\normalsize
			when $\delta\to0$, the above converges to
			$$\log\frac{1}{1-\delta_1}+D(\mu_0||P_{\delta_0-\delta_1}).$$
			\item
			Other symmetric cases can be solved similarly.
		\end{enumerate}
		From the above arguments, we have
		\begin{eqnarray}
%	&&\lim_{\delta\to0}\sup_{x\in\mathcal{R}}D(\mu_x^{\delta}||B_L(P_0,\delta_0)) \nonumber \\
\lim_{\delta\to0}\sup_{x\in\mathcal{R}}D(\mu_x^{\delta}||B_L(P_0,\delta_0)) 
		&\le& \log\frac{1}{1-\delta_1}+D(\mu_0||B_L(P_0,\delta_0-\delta_1)). \nonumber
		\end{eqnarray}
		Notice that this is true for any $\delta_1$. Letting $\delta_1\to0$, we have 
		\begin{eqnarray}
%		&&\lim_{\delta\to0}\sup_{x\in\mathcal{R}}D(\mu_x^{\delta}||B_L(P_0,\delta_0)) \nonumber\\		
\lim_{\delta\to0}\sup_{x\in\mathcal{R}}D(\mu_x^{\delta}||B_L(P_0,\delta_0)) 	&\le&\lim_{\delta_1\to0}\left( \log\frac{1}{1-\delta_1}+D(\mu_0||B_L(P_0,\delta_0-\delta_1))\right) \nonumber\\
		&=& \lim_{\delta_1\to0}D(\mu_0||B_L(P_0,\delta_0-\delta_1)) \nonumber\\
		&=&D(\mu_0||B_L(P_0,\delta_0)), \nonumber
		\end{eqnarray}
		the last equality comes from Lemma \ref{lem:kld_left_conti}: $D(\mu_0||B_L(P_0,\delta_0))$ is left continuous in $\delta_0$ if $P_0(t)$ is continuous in $t$. % (Lemma \ref{lem:kld_left_conti}).
	\end{proof}

	\begin{lem}
		\label{lem:lowersemi}
		Given $P_0\in\mathcal{P}$ and $\delta_0>0$, $D(\mu||B_L(P_0,\delta_0))$ is lower semicontinuous in $\mu$ with respect to the weak convergence.
	\end{lem}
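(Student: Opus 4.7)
The plan is to exploit three ingredients that have already been set up in the appendix: (i) the fact, established in the proof of Lemma \ref{lem:pequalm} via equation (\ref{eq:achieves_infimum}), that the infimum in the definition of $D(\mu\|B_L(P_0,\delta_0))$ is attained by some distribution in $B_L(P_0,\delta_0)$ (and in particular always attained in the closed L\'evy ball $\bar{B}_L(P_0,\delta_0)\subseteq\mathcal{M}$); (ii) the compactness of $\mathcal{M}$ under weak convergence used throughout the appendix; and (iii) the classical (joint) lower semicontinuity of the KLD in both arguments under weak convergence, which was the lower-semicontinuity fact invoked in Section \ref{subsec:continuityA}. With these, lower semicontinuity of the robust KLD follows by a standard subsequence-extraction argument.

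Concretely, fix $\mu_0\in\mathcal{P}$ and let $\mu_n\wto\mu_0$. Write $a:=\liminf_{n\to\infty}D(\mu_n\|B_L(P_0,\delta_0))$; we want to show $D(\mu_0\|B_L(P_0,\delta_0))\le a$. If $a=\infty$ there is nothing to prove, so assume $a<\infty$. Pass to a subsequence (still indexed by $n$) along which $D(\mu_n\|B_L(P_0,\delta_0))\to a$. For each $n$, invoke (\ref{eq:achieves_infimum}) to pick $P_n\in B_L(P_0,\delta_0)\subseteq\bar{B}_L(P_0,\delta_0)$ with
\[
D(\mu_n\|P_n)=D(\mu_n\|B_L(P_0,\delta_0)).
\]
Because $\mathcal{M}$ is compact under weak convergence, a further subsequence (again relabelled) satisfies $P_n\wto P^\star$ for some $P^\star\in\mathcal{M}$. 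Since $\bar{B}_L(P_0,\delta_0)$ is closed in $\mathcal{M}$ (the defining inequalities $P_0(x-\delta_0)-\delta_0\le P(x)\le P_0(x+\delta_0)+\delta_0$ are preserved at every continuity point of $P_0$, hence everywhere by right-continuity), we conclude $P^\star\in\bar{B}_L(P_0,\delta_0)$.

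Now apply the joint lower semicontinuity of the KLD to the pair of weakly convergent sequences $\mu_n\wto\mu_0$ and $P_n\wto P^\star$:
\[
D(\mu_0\|P^\star)\le\liminf_{n\to\infty}D(\mu_n\|P_n)=a.
\]
Combining this with the identity $D(\mu_0\|B_L(P_0,\delta_0))=D(\mu_0\|\bar{B}_L(P_0,\delta_0))$ established in (\ref{eq:exchange1}), we obtain
\[
D(\mu_0\|B_L(P_0,\delta_0))=D(\mu_0\|\bar{B}_L(P_0,\delta_0))\le D(\mu_0\|P^\star)\le a,
\]
which is exactly the lower semicontinuity claim.

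The only delicate point is ingredient (iii): the paper's Section \ref{subsec:continuityA} stated lower semicontinuity of $D(\cdot\|P_0)$ with the second argument fixed, but here I need joint lower semicontinuity in both arguments. This is the standard Donsker--Varadhan variational-formula fact (KLD is the supremum over bounded continuous test functions $f$ of $\int f\,d\mu-\log\int e^f\,dP$, hence a supremum of jointly weakly continuous affine-in-$\mu$/log-concave-in-$P$ functionals, hence jointly weakly lower semicontinuous), and is easily cited from the same reference \cite{lsc_} used earlier. If one prefers to avoid the joint version, an alternative route is to use Lemma \ref{lem:pequalm} together with continuity of the finite-dimensional quantized robust KLD $D(\mu^{\mathcal{A}}\|B_L^{\mathcal{A}}(P_0,\delta_0))$ in $\mu^{\mathcal{A}}$ (for partitions $\mathcal{A}$ whose endpoints are $\mu_0$-continuity points, so that $\mu_n^{\mathcal{A}}\to\mu_0^{\mathcal{A}}$), and then take the supremum over such partitions; this gives the same conclusion without invoking joint lower semicontinuity directly.
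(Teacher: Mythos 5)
Your proposal is correct and follows essentially the same route as the paper's own proof: pick minimizers $P_n$ via (\ref{eq:achieves_infimum}), extract a weakly convergent subsequence using compactness of $\bar{B}_L(P_0,\delta_0)$, apply joint lower semicontinuity of the KLD to the pair $(\mu_n,P_n)$, and close with the identity (\ref{eq:exchange1}). Your version is in fact slightly more careful than the paper's (first passing to a subsequence realizing the $\liminf$, and explicitly flagging that the \emph{joint} lower semicontinuity of $D(\cdot\|\cdot)$ is what is being used), but the argument is the same.
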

	\begin{proof}
		Assume $\mu_n\wto\mu_0$. From (\ref{eq:achieves_infimum}), we know there exists $P_n\in B_L(P_0,\delta_0)$ such that $D(\mu_n||P_n)=D(\mu_n||B_L(P_0,\delta_0))$. Since $\bar{B}_L(P_0,\delta_0)$ is compact, there exists a subsequence of $P_n$ (which we again denote by $P_n$) that converge to $P_{\mu_0}\in\bar{B}_L(P_0,\delta_0)$. $D(\mu||P_{\mu_0})\le\liminf_{n\to\infty}D(\mu_n||P_n)$ because $(\mu_n, P_n)\to(\mu_0, P_{\mu_0})$ and the KLD is lower semi-continuous. Therefore we have
		\begin{eqnarray}
		D(\mu_0||B_L(P_0,\delta_0))&=&D(\mu_0||\bar{B}_L(P_0,\delta_0)) \label{eq:050201}\\
		&\le& D(\mu_0||P_{\mu_0}) \nonumber\\
		&\le&\liminf_{n\to\infty}D(\mu_n||P_n)\nonumber\\
		&=&\liminf_{n\to\infty}D(\mu_n||B_L(P_0,\delta_0))\nonumber
		\end{eqnarray}
		where (\ref{eq:050201}) comes from (\ref{eq:exchange1}). %Therefore, according to the definition, $D(\mu||B_L(P_0,\delta_0))$ is lower semicontinuous in $\mu$.
	\end{proof}
	
	It is straightforward to prove $D(\mu||B_L(P_0,\delta_0))$ is lower semicontinuous in $\mu$; proving that it is also upper semicontinuous is tricky. The key step of the long proof in the this section is Lemma \ref{lem:suptor}, which is explained below.
	
	For a fixed $P_0$, with small perturbation on $\mu$, $D(\mu||P_0)$ may vary in an arbitrary manner, thus $D(\mu||P_0)$ is not upper semicontinuous. $B_L(P_0,\delta_0)$ provides the maximum freedom for tolerating the perturbation on $\mu$, since the L\'evy metric is the weakest among other metrics. For all perturbations on $\mu$ that are within $B_L(\mu,\delta)$, the largest variation of $D(\mu||B_L(P_0,\delta_0))$ is achieved by a distribution whose CDF is constructed by shifting the $\mu(t)$ both horizontally and vertically to the edge of of $B_L(\mu,\delta)$. Such shifts can be tolerated by $B_L(P_0,\delta_0)$, so as the level of perturbation on $\mu$ decreases to $0$, and the corresponding variation in $D(\mu||B_L(P_0,\delta_0))$ diminishes.
	
	By proving $D(\mu||B_L(P_0,\delta_0))$ is both upper semicontinuous and lower semicontinuous in Lemma \ref{lem:uppersemi} and \ref{lem:lowersemi}, we know that if $P_0(t)$ is continuous in $t$, $D(\mu||B_L(P_0,\delta_0))$ is continuous in $\mu$ with respect to the weak convergence.


\begin{thebibliography}{1}
\bibliographystyle{IEEEtran}

\bibitem{ICPCS_2004}
I.~Csisz$\acute{a}$r and P.~C.~Shields, 
\newblock ``{Information theory and statistics: a tutorial},''
\newblock {\em Commun. Inf. Theory}, vol. 1, no. 4, pp. 417-528, 2004.

\bibitem{Cover:1991}
T.~M.~Cover and J.~A.~Tomas,
\newblock {\em Elements of information theory}, Wiley-Interscience, 1991.

\bibitem{largebook}
A.~Dembo and O.~Zeitouni,
\newblock {\em Large deviation techniques and applications}, second edition,
Springer, 1998.

\bibitem{Asymptotically}
W.~Hoeffding,
\newblock ``{Asymptotically optimal tests for multinomial distributions},''
\newblock {\em Ann. Math. Statist.}, vol. 36, no. 2, pp. 369-401, Apr. 1965.

\bibitem{largedevi}
O.~Zeitouni and M.~Gutman,
\newblock ``{On universal hypotheses testing via large deviations},''
\newblock {\em IEEE Trans. Inf. Theory}, vol. 37, no. 2, pp. 285-290, Mar. 1991.

\bibitem{lsc_}
E.~Posner, 
\newblock ``{Random coding strategies for minimum entropy},''
\newblock {\em IEEE Trans. Inf. Theory.}, vol. 21, no. 4, pp. 388-391, Jul. 1975.


\bibitem{SKRAL:1951}
S.~Kullback and R.~A.~Leibler,
\newblock ''{On information and sufficiency},'' 
\newblock {\em Ann. Math. Statistics}, 22(1): 79-86, 3 1951.


\bibitem{spaceproperty}
M. Lo$\grave{e}$ve,
\newblock {\em Probability theory}, second edition,
Van Nostrand, 1960. 



\bibitem{pthm}
P.~Billingsley,
\newblock {\em Convergence of probability measures},
Wiley, Jul. 1999.c

\bibitem{TV}
A.~L.~Gibbs and F.~E.~Su,
\newblock ``{On choosing and bounding probability metrics},''
\newblock {\em International Statistical Review}, vol. 70, no. 3, pp. 419-435, Dec. 2002.

\bibitem{Yang-thesis}
P.~Yang,
\newblock ``{Robust Kullback-Leibler divergence and its applications in universal hypothesis testing and deviation detection},''
Ph.D. dissertation, Dept. Elect. and Comp. Eng., Syracuse Univ., Syracuse, NY, 2016. 


\bibitem{kld_partition}
M.S.~Pinsker,
\newblock {\em Information and Information Stability of Random Variables and Processes (Transl.)}, San Francisco:Holden Day, 1964.

\bibitem{simple}
I.~Csisz$\acute{a}$r,
\newblock ``{A simple proof of Sanov's theorem},''
\newblock {\em Bulletin of the Brazilian Mathematical Society}, vol. 37, no. 4, pp. 453-459, 2006.

\bibitem{GOR}
P.~Groeneboom, J.~Oosterhoff and F.~H.~Ruymgaart,
\newblock ``{Large deviation theorems for empirical probability measures},''
\newblock {\em Ann. Probab.}, vol. 7, no. 4, pp. 553-586, Aug. 1979.








\end{thebibliography}
\end{document}